\documentclass[a4paper,11pt]{article}

\usepackage[margin=1in]{geometry}

\usepackage[utf8x]{inputenc}
\usepackage{amsmath}
\usepackage{amssymb}
\usepackage{amsthm}
\usepackage{algorithmic}
\usepackage{graphicx}
\usepackage{algorithm}
\usepackage{color}

\newmuskip\pFqmuskip

\newcommand*\pFq[6][8]{%
  \begingroup 
  \pFqmuskip=#1mu\relax
  \mathcode`\,=\string"8000
  \begingroup\lccode`\~=`\,
  \lowercase{\endgroup\let~}\pFqcomma
  {}_{#2}F_{#3}{\left[\genfrac..{0pt}{}{#4}{#5};#6\right]}%
  \endgroup
}
\newcommand{\pFqcomma}{\mskip\pFqmuskip}

\newcommand{\SoS}{\text{\sc{SoS}}}

\newcommand{\set}[1]{\left\{ #1 \right\}}
\newcommand{\PS}{\mathcal{P}}

\newcommand{\y}{y^N}
\newcommand{\fallfac}[2]{{#1}^{\underline{#2}}}
\newcommand{\risefac}[2]{{#1}^{\overline{#2}}}

\newcommand{\R}{\mathbb{R}}

\usepackage{mathtools}

\newtheorem{theorem}{Theorem}
\newtheorem{lemma}[theorem]{Lemma}

\newtheorem{corollary}[theorem]{Corollary}
\newtheorem{definition}{Definition}

\title{Tight Sum-of-Squares lower bounds for binary polynomial optimization problems\footnote{Supported by the Swiss National Science Foundation project 200020-144491/1 ``Approximation Algorithms for Machine Scheduling Through Theory and Experiments''.}}

\author{ Adam Kurpisz \quad Samuli Lepp\"anen~\quad Monaldo Mastrolilli\\
\small{IDSIA, 6928 Manno,
Switzerland}
{\{adam, samuli, monaldo\}@idsia.ch}
}

\date{}

\begin{document}

\maketitle

\begin{abstract}
We give two results concerning the power of the Sum-of-Squares(SoS)/Lasserre hierarchy.
For binary polynomial optimization problems of degree $2d$ and an odd number of variables $n$,  we prove that $\frac{n+2d-1}{2}$ levels of the SoS/Lasserre hierarchy are necessary to provide the exact optimal value. This matches the recent upper bound result by Sakaue, Takeda, Kim and Ito.

Additionally, we study a conjecture by Laurent, who considered the linear representation of a set with no integral points. She showed that the Sherali-Adams hierarchy requires $n$ levels to detect the empty integer hull, and conjectured that the SoS/Lasserre rank for the same problem is $n-1$. We disprove this conjecture and derive lower and upper bounds for the rank.

\end{abstract}

\section{Introduction}

In this paper we are concerned with the unconstrained \emph{binary polynomial optimization problems} (BPOP):
$$\min_{x \in \set{0,1}^n} f(x)$$
where $f(x)$ is a multivariate polynomial.
Many basic optimization problems are special cases of this general problem. Prominent examples include the \textsc{MaxCut} problem and the boolean \textsc{Max $k$-csp}. For these problems the polynomials have at most degree 2 and $k$, respectively.

The Sum-of-Squares (SoS)/Lasserre hierarchy of semidefinite (SDP) relaxations~\cite{Lasserre01,parrilo00} is one of the most studied solution methods for general polynomial optimization problems (POP) including BPOP.
The hierarchy is parameterized by a parameter $t$ called the \emph{relaxation level} and larger levels correspond to tighter relaxations. At level $t$,
the relaxation consists of $n^{O(t)}$ variables and constraints, and it is thus solvable in time $n^{O(t)}$ using for example the ellipsoid method.
At level $n$ the SOS hierarchy finds the exact optimal value of an arbitrary constrained BPOP (but not a general POP). 

For \emph{quadratic} BPOP, Laurent~\cite{Laurent03a} conjectured  that at level $\lceil \frac{n}{2} \rceil$ the relaxation provides the exact optimal value.
She also provided a matching lower bound showing that $\lfloor \frac{n}{2} \rfloor$ levels are not enough for finding the integer cut polytope of the complete graph with $n$ nodes, when $n$ is odd (the result was preceded by a similar lower bound by Grigoriev~\cite{Grigoriev01} for the \textsc{Knapsack} problem).
 The conjecture was proved by Fawzi, Saunderson and Parrilo~\cite{FawziSaundersonParrilo15} while showing that $\lceil \frac{n}{2} \rceil$ rounds are enough to exactly solve \emph{any} unconstrained BPOP of degree~2. Very recently, Sakaue, Takeda, Kim and Ito~\cite{SakaueTKI16} extended the result of~\cite{FawziSaundersonParrilo15} and showed that the SoS hierarchy requires at most $\lceil (n+r-1)/2 \rceil$ rounds to find the exact optimal value of an unconstrained BPOP of degree $r$ with $n$ variables. Note that the two upper bounds~\cite{FawziSaundersonParrilo15,SakaueTKI16} coincide when $n$ is odd and $r = 2$, whereas for even $n$ there is a difference of 1 (although~\cite{SakaueTKI16} show also that if the optimized polynomial consists of only even degree monomials, the bound reduces to $\lceil (n+r-2)/2 \rceil$, matching the bound of~\cite{FawziSaundersonParrilo15} for example for the \textsc{MaxCut} problem for every $n$). Furthermore, Sakaue et al.~\cite{SakaueTKI16} numerically confirmed that for some degrees their bound is tight for certain instances of unconstrained BPOPs with~8 variables.

In a recent breakthrough Lee, Raghavendra and Steurer~\cite{LeeRagSteu15} proved that for the class of \textsc{Max-CSP}s the SoS relaxation yields the ``optimal'' SDP approximation, meaning that SDPs of polynomial-size are equivalent in power to those
arising from $O(1)$ rounds of the SoS relaxations.
This result implies that known lower bound for SoS SDP relaxations translates to corresponding lower bounds on the size of any SDP formulations. With this aim, they build on the work of Grigoriev/Laurent~\cite{Grigoriev01,Laurent03a} to show that, for odd $n$, any sum of squares of degree $\lfloor n/2\rfloor$ polynomials has $\ell_1$-error at least $2^{n-2}/\sqrt{n}$ in approximating the following quadratic function
 \begin{equation}\label{eq:pol2}
f(x)=(\|x\|_1-\lfloor n/2\rfloor)(\|x\|_1-\lfloor n/2\rfloor-1)
\end{equation}
This result is shown to imply lower bounds on the semidefinite extension complexity of the \emph{correlation polytope} (which is isomorphic to the cut polytope and sometimes also called
boolean quadric polytope). By reduction, the latter in turn implies exponential lower bounds for the integer cut, TSP and stable set polytopes. In~\cite{LeePWY16} Lee, Prakash, de Wolf and Yuen proved that these lower bounds cannot be improved by showing better $\ell_1$-approximations of $f(x)$.
%
%

\paragraph*{Our Results.}
In this paper we give two results concerning the power of the SoS hierarchy. Our first result shows that the bound given by Sakaue et al.~\cite{SakaueTKI16} is tight for polynomials with even degree and an odd number of variables. More precisely, we consider BPOPs of the form $\min_{x\in\{0,1\}^n} f_d(x)$ where $f_d(x)$ is a degree $2d$ (for $d\geq1$) polynomial defined as follows:
\begin{align}\label{eq:pold}
f_{d}(x)=  \fallfac{\left(\|x\|_1 - \lfloor n/2\rfloor + d -1 \right)}{2d}
\end{align}
where $\fallfac{k}{r} = k(k-1)\cdots(k-r+1)$ denotes the falling factorial. For $d=1$ we have $f_1(x)=f(x)$, where $f(x)$ is the polynomial defined in \eqref{eq:pol2} and considered in~\cite{LeeRagSteu15,LeePWY16}.
We show that for odd $n = 2m + 1$, the SoS relaxation allows negative values for polynomial $f_d(x)$ that is non-negative over $\set{0,1}^n$, even at level $\lceil \frac{n+2d-1}{2}\rceil - 1= m+d-1$.
%


Our second result concerns comparing the SoS hierarchy to other lift and project methods. A commonly used benchmark for comparing hierarchies is to find the smallest level at which they find the convex hull of a given set of integral points $P$,\footnote{The smallest such level is called the \emph{rank} of $P$, and it is always smaller or equal to $n$ for the usually studied hierarchies.} usually given as an intersection of the set $\set{0,1}^n$ and a polytope. Examples of such results include~\cite{GoemansT01,Grigoriev01,Grigoriev01b,GrigorievV01,Laurent03a,MathieuS09,StephenT99}. In~\cite{Laurent03}, Laurent shows that the Sherali-Adams hierarchy detects that the set
\begin{equation} \label{eq:Laurent_empty_set}
K = \set{0,1}^n \cap \set{x \in [0,1]^n ~|~ \sum_{r \in R} x_r + \sum_{r \in R \setminus N} (1-x_r) \geq \frac{1}{2} ~ \text{for all } R\subseteq N}
\end{equation}
is empty only after $n$ levels. She then conjectures the SoS rank of $K$ is $n-1$.
The polytope $K$ has been used earlier to show that $n$ iterations are needed also for the following procedures: the Lov\'asz-Schrijver $N_+$ operator (with positive semidefiniteness) \cite{GoemansT01}, the Lov\'asz-Schrijver $N_+$ operator combined with taking Chv\'atal cuts~\cite{cook2001matrix}, and the $N_+$ operator combined with taking Gomory mixed integer cuts (equivalent to disjunctive cuts) \cite{cornuejols2001rank}. In this paper we disprove Laurent's conjecture, and show that indeed the SoS rank of $K$ is bounded between $\Omega(\sqrt{n})$ and $n-\Omega(n^{1/3})$.

Interestingly, Au~\cite{Au14} and the authors of this paper~\cite{KurpiszLM15} independently considered the rank of a variation of the set $K$ where on the right hand side of the inequalities there is an exponentially small constant instead of $\frac{1}{2}$. Both works show that the rank of the modified $K$ is exactly $n$.

In our proofs we demonstrate the use of a recent theorem of the authors~\cite{KurpiszLM16} that simplifies the positive semidefiniteness (PSD) condition of the SoS hierarchy when the problem formulation is highly symmetric (as noted in \cite{LeePWY16}, Blekherman~\cite{Blekherman15} has also obtained a similar result that is still in preparation). 
Our first result is obtained by showing that a certain conical combination of solutions with non-integral relaxation value to the SoS relaxation for the function \eqref{eq:pol2} gives a negative SoS relaxation value for the polynomials \eqref{eq:pold} of degree $2d$. Then, for the first and the second result, we apply the theorem in~\cite{KurpiszLM16} to reduce the PSDness condition into showing that a particular inequality is satisfied for every polynomial with a certain form. Showing that the inequality is satisfied (lower bounds) or cannot be satisfied (upper bounds) then boils down to evaluating or approximating a certain combinatorial sum. 
Our results also answer the question in \cite{LeePWY16} regarding the applications of the theorem of~\cite{Blekherman15,KurpiszLM16}.

\section{The Sum-of-Squares hierarchy}

In this paper we consider the SoS hierarchy when applied to (i) unconstrained $0/1$ polynomial optimization problems, and (ii) approximating the convex hull of the set 
\begin{equation} \label{eq:definition_of_set_K}
P = \set{x \in \set{0,1}^n~|~g_{\ell}(x)\geq 0, \forall \ell\in [p]}
\end{equation}
where $g_\ell(x)$ are linear constraints and $p$ a positive integer. The form of the SoS hierarchy we use in this paper is equivalent to the one used in literature (see e.g.~\cite{BarakBHKSZ12,Lasserre01,Laurent03}) and follows from applying a change of basis to the dual certificate of the refutation of the proof system (see~\cite{KurpiszLM16} for the details on the change of basis and \cite{MekaPW15} for discussion on the connection to the proof system). We use this change of basis  in order to obtain a useful decomposition of the moment matrices as a sum of rank one matrices of special kind. 

For any $I\subseteq N=\{1,\ldots,n\}$, let $x_I$ denote the $0/1$ solution obtained by setting $x_i = 1$ for $i \in I$, and $x_i = 0$ for $i\in N\setminus I$. For a function $f:\set{0,1}^n \rightarrow \R$, we denote by $f(x_I)$ the value of the function evaluated at $x_I$. In the SoS hierarchy defined below there is a variable $\y_I$ that can be interpreted as the ``relaxed'' indicator variable for the solution $x_I$. We point out that in this formulation of the hierarchy the number of variables $\{\y_I:I\subseteq N\}$ is exponential in $n$, but this is not a problem in our context since we are interested in proving lower and upper bounds rather than solving an optimization problem.

Let $\PS_t(N)$ be the collection of subsets of $N$ of size at most $t\in\mathbb{N}$. For every $I \subseteq N$, the $q$-\text{zeta vector}  $Z_I \in \mathbb{R}^{\PS_q(N)}$ is a $0/1$ vector with $J$-th entry ($|J| \leq q$) equal to $1$ if and only if $J \subseteq I$.\footnote{In order to keep the notation simple, we do not emphasize the parameter $q$ as the dimension of the vectors should be clear from the context.}
Note that $Z_IZ_I^\top$ is a rank one matrix and the matrices considered in Definitions~\ref{def:sos_definition_polynomials} and~\ref{def:sos_definition_integer_hull} are linear combinations of these rank one matrices.

To simplify the presentation we define the SoS hierarchy separately for polynomial optimization problems and for the integer hull approximation.

\begin{definition}~\label{def:sos_definition_polynomials}
The $t$-th round SoS hierarchy relaxation of $\min_{x \in \set{0,1}^n} f(x)$, denoted by $\SoS_t(f)$, is the optimization problem with variables $\{\y_I \in \mathbb{R}:  \forall I \subseteq N\}$ of the form 
\begin{eqnarray}
 \min_{\y \in \mathbb{R}^{2^n}} \sum_{I \subseteq N} \y_I f(x_I)   & &  \label{eq:sos_pol_objective} \\
 \text{s.t. }  \sum_{\substack{I \subseteq N}} \y_I&= &1, \label{eq:sos_sum_condition_pol} \\
 \sum_{\substack{I \subseteq N}} \y_I Z_I Z_I^\top &\succeq& 0, \text{ where } Z_I \in \mathbb{R}^{\PS_{t}(N)} \label{eq:sos_variables_pol}
\end{eqnarray}
\end{definition}

\begin{definition}~\label{def:sos_definition_integer_hull}
 The $t$-th round SoS hierarchy relaxation for the set $P$ as given in~\eqref{eq:definition_of_set_K}, denoted by $\SoS_t(P)$, is the set of variables $\{\y_I \in \mathbb{R}:  \forall I \subseteq N\}$ that satisfy
 \begin{eqnarray}
 \sum_{\substack{I \subseteq N}} \y_I&=& 1, \label{eq:sos_sum_condition_ih} \\
 \sum_{\substack{I \subseteq N}} \y_I Z_I Z_I^\top &\succeq& 0, \text{ where } Z_I \in \mathbb{R}^{\PS_{t+1}(N)} \label{eq:sos_variables_ih}\\
 \sum_{\substack{I \subseteq N}} g_\ell(x_I)\y_I Z_IZ_I^\top &\succeq& 0, ~ \forall \ell\in [p]  \text{, where }  Z_I \in \mathbb{R}^{\PS_{t}(N)}  \label{eq:sos_constraints_ih}
\end{eqnarray}
\end{definition}
It is straightforward to see that the SoS hierarchy formulation given in Definition~\ref{def:sos_definition_integer_hull} is a relaxation of the integral polytope. Indeed consider any feasible integral solution $x_I \in P$ and set $\y_I=1$ and the other variables to zero. This solution clearly satisfies~\eqref{eq:sos_sum_condition_ih} and~\eqref{eq:sos_variables_ih} because the rank one matrix $Z_IZ_I^\top$ is positive semidefinite (PSD), and~\eqref{eq:sos_constraints_ih} since $x_I\in P$.

For a set $Q \subseteq [0,1]^n$, we define the projection from $\SoS_t(Q)$ to $\mathbb{R}^n$ as $x_i = \sum_{i \in I \subseteq N} y_I^N$ for each $i \in \set{1,...,n}$. The \emph{SoS rank} of $Q$, $\rho(Q)$, is the smallest $t$ such that $\SoS_t(Q)$ projects exactly to the convex hull of $Q \cap \set{0,1}^n$.

\subsection{Using symmetry to simplify the PSDness conditions} \label{sect:main_theorem}

In this section we present a theorem given in~\cite{KurpiszLM16} that can be used to simplify the PSDness conditions~\eqref{eq:sos_variables_pol},~\eqref{eq:sos_variables_ih} and~\eqref{eq:sos_constraints_ih} when the problem formulation is very symmetric. More precisely, the theorem can be applied whenever the solutions and constraints are symmetric in the sense that $w_I^N = w^N_J$ whenever $|I|=|J|$ where $w_I^N$ is understood to denote either $y^N_I$ or $g_\ell(x_I)y^N_I$. In what follows we denote by $\mathbb{R}[x]$ the ring of polynomials with real coefficients and by $\mathbb{R}[x]_d$ the polynomials in $\mathbb{R}[x]$ with degree less or equal to $d$.

\begin{theorem}[\cite{KurpiszLM16}] \label{thm:PSD_as_polynomial}
For any $t\in \{1,\ldots,n\}$, let $\mathcal{S}_t$ be the set of univariate polynomials $G_h(k)\in \R[k]$, for $h \in \{0,\ldots,t\}$, that satisfy the following conditions: 
\begin{align}
G_h(k)&\in \R[k]_{2t} \label{eq:degree}\\
G_h(k)&=0 \qquad \text{for }k\in \{0,\ldots,h-1\}\cup \{n-h+1,\ldots,n\} \text{, when }h\geq 1\label{eq:zeros}\\
G_h(k)&\geq0 \qquad \text{for }k\in [h-1, n-h+1]\label{eq:geq0}
\end{align} 
For any fixed set of values $\{w^N_k \in \R:k =0,
\ldots,n\}$,
if the following holds 
\begin{align}
\sum_{k=h}^{n-h}\binom{n}{k} w^N_k G_h(k) &\geq 0 \qquad \forall G_h(k)\in \mathcal{S}_t \label{eq:sym_psd_cond}
\end{align}
then 
$$
 \sum_{k = 0}^n w^N_k \sum_{\substack{I \subseteq N \\ |I| = k}} Z_IZ_I^\top \succeq 0 \qquad 
$$
where $Z_I \in \mathbb{R}^{\PS_{t}(N)}$.
\end{theorem}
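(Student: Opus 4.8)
The plan is to reduce the matrix PSDness statement to the scalar inequality~\eqref{eq:sym_psd_cond} by exploiting the action of the symmetric group $\mathrm{Sym}(N)$ on $\mathbb{R}^{\PS_t(N)}$. First I would observe that the matrix $M := \sum_{k=0}^n w^N_k \sum_{|I|=k} Z_I Z_I^\top$ commutes with the permutation representation of $\mathrm{Sym}(N)$ on the coordinate space indexed by $\PS_t(N)$ (permuting a subset $J$ to $\sigma(J)$), since the set $\{I : |I|=k\}$ is permutation-invariant and $Z_{\sigma(I)} = \rho(\sigma) Z_I$. Hence $M$ is block-diagonalized by the decomposition of $\mathbb{R}^{\PS_t(N)}$ into isotypic components, and checking $M \succeq 0$ amounts to checking one scalar (or small-block) inequality per irreducible type appearing. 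The key point is that the relevant representation decomposes into the irreducibles indexed by partitions $(n-h,h)$ for $h \in \{0,\ldots,t\}$ — each appearing once — so there is exactly one quadratic-form condition per value of $h$, which is what the family $\{G_h\}_{h=0}^t$ is designed to encode.

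Next I would make the correspondence between symmetry-adapted test vectors and the polynomials $G_h$ explicit. For a vector $v \in \mathbb{R}^{\PS_t(N)}$ lying in (a copy of) the isotypic component of type $(n-h,h)$, the quadratic form $v^\top Z_I Z_I^\top v = (Z_I^\top v)^2$ depends on $I$ only through $|I|=k$ once we average over the group, and the averaged value is a polynomial in $k$ of degree at most $2t$ that vanishes for $k \in \{0,\ldots,h-1\}$ (too few elements to ``see'' an $h$-subset in the support) and, by the symmetry $I \mapsto N\setminus I$ reflected through the structure of these components, also for $k \in \{n-h+1,\ldots,n\}$; it is a square of a real quantity, hence $\geq 0$ on the relevant range. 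This shows that every such averaged test produces a $G_h \in \mathcal{S}_t$, and conversely — and this is the direction actually needed — that the conditions~\eqref{eq:degree}–\eqref{eq:geq0} characterize (a superset of) exactly the quadratic forms $k \mapsto v^\top(\sum_{|I|=k} Z_IZ_I^\top)v$ arising from vectors $v$ in the $h$-th isotypic block. Then $v^\top M v = \sum_k w^N_k \cdot [\text{that polynomial evaluated}]$, which after collecting the $\binom{n}{k}$ count of $k$-subsets is precisely $\sum_{k=h}^{n-h}\binom{n}{k} w^N_k G_h(k)$; so~\eqref{eq:sym_psd_cond} for all $G_h \in \mathcal{S}_t$ forces $v^\top M v \geq 0$ for all $v$, i.e. $M \succeq 0$.

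Concretely I would organize the argument as: (1) set up the $\mathrm{Sym}(N)$-action and note $M$ is equivariant; (2) decompose $\mathbb{R}^{\PS_t(N)}$ into isotypic pieces and reduce $M \succeq 0$ to positive semidefiniteness on each piece, indexed by $h \in \{0,\dots,t\}$; (3) within block $h$, show that for any test vector $v$ the function $k \mapsto (Z_I^\top v)^2$ averaged over $|I|=k$ lies in $\mathcal{S}_t$ — verifying the degree bound~\eqref{eq:degree} (it is built from the entries of $Z_I$, which are products of at most $t$ indicator terms, each a degree-$\le 1$ function of the ``slots'', so the averaged square has degree $\le 2t$ in $k$), the vanishing~\eqref{eq:zeros}, and nonnegativity~\eqref{eq:geq0}; (4) conclude by linearity that~\eqref{eq:sym_psd_cond} implies $v^\top M v \ge 0$ for all $v$, hence the matrix inequality. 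The main obstacle I anticipate is step (3): pinning down the exact correspondence between symmetry-adapted vectors in the $h$-block and polynomials of the stated degree with exactly the claimed zero-set, and in particular being careful that $\mathcal{S}_t$ is large enough (we only need that every arising quadratic form is represented, not a bijection). This is essentially a computation in the Johnson/association-scheme / Specht-module framework, and the degree count plus the "$h-1$ zeros at each end" count are the delicate bookkeeping; everything else is standard representation-theoretic block-diagonalization and linearity.
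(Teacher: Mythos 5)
This theorem is imported from \cite{KurpiszLM16}; the present paper contains no proof of it, so your proposal can only be measured against the argument in that reference. Your overall strategy --- use the $\mathrm{Sym}(N)$-equivariance of $M=\sum_k w^N_k\sum_{|I|=k}Z_IZ_I^\top$ to block-diagonalize along isotypic components indexed by $h$, and show that for $v$ in the $h$-th component the quadratic form $v^\top Mv$ collapses to $\sum_k\binom{n}{k}w^N_k G_h(k)$ for some $G_h\in\mathcal{S}_t$ --- is the right one and is essentially how the cited result is obtained. One inaccuracy: the irreducible of type $(n-h,h)$ appears in $\mathbb{R}^{\PS_t(N)}$ with multiplicity $t-h+1$, not once, so each $h$ contributes a block (a family of quadratic-form conditions parameterized by $v_h$), not a single scalar condition; this is harmless for your logic since, as you yourself note, you only need every arising quadratic form to land in $\mathcal{S}_t$, not a bijection.

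The genuine gap is in your verification of \eqref{eq:geq0}. The polynomial you attach to $v_h$ is $k\mapsto \frac{1}{\binom{n}{k}}\sum_{|I|=k}(Z_I^\top v_h)^2=\sum_{J,J'}v_Jv_{J'}\,\fallfac{k}{|J\cup J'|}/\fallfac{n}{|J\cup J'|}$, which indeed has degree at most $2t$. But ``it is a square'' only certifies nonnegativity at the integers $k\in\{0,\ldots,n\}$, whereas \eqref{eq:geq0} demands nonnegativity on the entire \emph{real} interval $[h-1,n-h+1]$ --- a distinction the paper explicitly stresses immediately after the theorem statement. Without real-interval nonnegativity your $G_h$ need not belong to $\mathcal{S}_t$, and the hypothesis \eqref{eq:sym_psd_cond} cannot be invoked; a degree-$2t$ polynomial that is nonnegative at integer points can perfectly well dip below zero between them. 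Closing this gap requires the explicit harmonic/Johnson-scheme computation showing that for $v_h$ in the $h$-th component the arising polynomial has the form $\fallfac{k}{h}\,\fallfac{(n-k)}{h}\,P(k)^2$ (or a nonnegative combination of such terms) with $\deg P\le t-h$, which is manifestly nonnegative on $[h-1,n-h+1]$ and delivers \eqref{eq:degree} and \eqref{eq:zeros} in the same stroke. That computation, not the representation-theoretic reduction, is the real content of the theorem, and it is exactly the step your sketch leaves open.
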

Note that polynomial $G_h(k)$ in~\eqref{eq:geq0} is nonnegative in a \emph{real interval}, and in~\eqref{eq:zeros} it is zero over a \emph{set of integers}. Moreover, constraints \eqref{eq:sym_psd_cond} are trivially satisfied for $h> \lfloor n/2 \rfloor$.

\section{Tightness of the SoS upper bounds for unconstrained BPOPs}

In~\cite{SakaueTKI16} it is shown that the SoS hierarchy exactly solves any unconstrained BPOP of degree $r$ with $n$ variables after $\lceil \frac{n+r-1}{2}\rceil$ levels. We show that this bound is tight for certain values of $n$ and $r$, by giving a polynomial of degree $r=2d$ for $d \geq 1$ that is non-negative over the hypercube, and show that when $n = 2m+1$, $m\geq d$, the SoS relaxation of the corresponding BPOP attains a negative value at level $t=\lceil \frac{n+2d-1}{2}\rceil-1=m+d-1$. 

More precisely, we consider the degree $2d$ polynomial 
\begin{equation}
\label{eq:poly_2d}
f_d(x)=  \fallfac{\left(\|x\|_1 + d - m -1 \right)}{2d}
\end{equation}
where $\fallfac{k}{r} = k(k-1)\cdots(k-r+1)$ denotes the falling factorial and $\|x\|_1 = \sum_i x_i$. For the sake of convenience, we denote by $f_d(k)$ the univariate polynomial evaluated at any point $x$ with $\sum_i x_i = k$. We obtain the following result

\begin{theorem}
\label{thm:rank_of_poly_2d}
For odd $n$, the SoS relaxation of minimizing $f_{d}$ requires at least $\lceil \frac{n+2d-1}{2}\rceil$ levels to find the exact optimum.
\end{theorem}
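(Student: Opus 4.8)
The plan is to exhibit, for $n=2m+1$ with $m\ge d$, a feasible solution of $\SoS_t(f_d)$ at level $t=m+d-1=\lceil\frac{n+2d-1}{2}\rceil-1$ whose objective is strictly negative; since $f_d\ge 0$ on $\set{0,1}^n$ and attains the value $0$ there, this shows that level $t$ cannot certify the optimum, which is exactly the asserted lower bound (the matching upper bound at level $t+1$ is~\cite{SakaueTKI16}). Because $f_d(x)$ depends only on $\|x\|_1$, it is natural to restrict to solutions that are constant on size classes, $\y_I=w_k$ for $|I|=k$. Setting $\nu_k:=\binom nk w_k$ and invoking Theorem~\ref{thm:PSD_as_polynomial}, it suffices to find reals $\nu_0,\dots,\nu_n$ with $\sum_{k=0}^n\nu_k=1$ and $\sum_{k=h}^{n-h}\nu_kG_h(k)\ge 0$ for every $G_h\in\mathcal S_t$ and every $h$; by the remark after the theorem only $h=0,1,\dots,m$ give non-vacuous constraints. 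The objective then reads $\sum_{k=0}^n\nu_kf_d(k)$, and the point is to make it negative.

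First I would record the elementary factorization
\[
 f_d(k)=\prod_{i=0}^{d-1}f_1\bigl(k+d-1-2i\bigr),
\]
which follows by matching the $2d$ integer roots $m-d+1,\dots,m+d$ and the monic leading coefficient of the two sides; thus $f_d$ is a product of $d$ integer shifts of the quadratic $f_1$, symmetrically placed about $0$. Next I would settle the base case $d=1$, which is exactly the Grigoriev--Laurent instance at level $m=\lfloor n/2\rfloor$: one exhibits an explicit symmetric $\nu^{(1)}$ with $\sum_k\nu^{(1)}_k=1$, with $\sum_{k=h}^{n-h}\nu^{(1)}_kG_h(k)\ge 0$ for all $h\le m$, and with $\sum_k\nu^{(1)}_kf_1(k)<0$. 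This can be re-derived from Theorem~\ref{thm:PSD_as_polynomial}, and since $n=2m+1$ leaves one free parameter one actually obtains a one-parameter family of such ``fooling'' solutions; I would extract from this family the data needed below (reflection symmetry $\nu_k=\nu_{n-k}$, the support, and the values of the relevant sums).

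For general $d$ the candidate solution is then a conical combination $\nu=\sum_s\lambda_s\,\nu^{(s)}$, $\lambda_s\ge 0$, of building blocks of the $f_1$-type above (the ``fooling'' ones having negative $f_1$-value, i.e.\ non-integral relaxation value), with the shifts chosen to line up with the factorization of $f_d$. Since the inequalities $\sum_{k=h}^{n-h}\nu_kG_h(k)\ge 0$ are linear and preserved under nonnegative combinations, the only genuinely new point in checking feasibility at level $m+d-1$ is that this specific combination survives the degree-$2(m+d-1)$ test polynomials $G_h$ for $h\le m$; after rescaling so that $\sum_k\nu_k=1$, the objective $\sum_k\nu_kf_d(k)$ becomes an explicit signed combination of the numbers $\sum_k\nu^{(s)}_kf_d(k)$, which one must show to be negative. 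Both steps reduce to evaluating, or bounding, a concrete signed binomial sum (of hypergeometric type).

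The main obstacle is precisely this combinatorial estimate together with the construction of the building blocks: verifying the $G_h$-inequalities for all $h\le m$ at degree $2(m+d-1)$ is already the technical heart of the $d=1$ case, and then choosing the shifts and the weights $\lambda_s$ so that the positive contributions of $f_d$ from the ``genuine-looking'' part of each $\nu^{(s)}$ are outweighed by the negative contribution from the fooling part --- i.e.\ pinning down the sign of the resulting sum --- is where I expect the real difficulty to lie.
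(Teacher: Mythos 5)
Your high-level plan does coincide with the paper's: restrict to symmetric solutions, reduce PSDness via Theorem~\ref{thm:PSD_as_polynomial}, build the level-$(m+d-1)$ solution as a conical combination of Grigoriev--Laurent-type solutions, and finish by evaluating a signed sum of hypergeometric type. But the proposal stops exactly where the proof begins, and one structural claim is misleading. The remark that the constraints $\sum_k \nu_k G_h(k)\ge 0$ are ``preserved under nonnegative combinations'' only helps if each building block individually satisfies them \emph{at level $m+d-1$}. The $d=1$ fooling solutions do not: by~\cite{FawziSaundersonParrilo15}, level $m+1\le m+d-1$ (for $d\ge 2$) already solves every quadratic BPOP exactly, so no solution with negative $f_1$-value can be feasible there. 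Feasibility of the combination is thus an emergent property of a very specific choice of weights, not a consequence of conicity. Concretely, the paper takes $z_k^N$ proportional to $(-1)^{n-k}/\fallfac{(\frac n2+d-1-k)}{2d-1}$ and uses a partial-fraction identity (Lemma~\ref{lem:partial_fraction_decomposition}) to write it as $\sum_{j=0}^{2d-2}a_j\,y_k^N[\frac n2+d-1-j]$ with $a_j>0$; the evaluation points are the $2d-1$ half-integers strictly between the $2d$ consecutive integer roots of $f_d$, not shifts matched to your factorization $f_d(k)=\prod_i f_1(k+d-1-2i)$ (which is correct but plays no role in the argument). The mechanism you are missing is the degree count of Lemma~\ref{lem:polynomial_main_thm}: dividing a test polynomial $P$ of degree at most $2(m+d-1)=n+2d-3$ by $g(k)=\fallfac{(\frac n2+d-1-k)}{2d-1}$ leaves a quotient of degree at most $n-2$, which is annihilated by $\sum_k(-1)^k\binom nk(\cdot)$; hence $\sum_k\binom nk z_k^N P(k)=\sum_j a_j P(\frac n2+d-1-j)$, a conical combination of values of $P$ at points inside $[m-d+1,m+d]$. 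This is precisely what makes PSDness hold at level $m+d-1$ and fail one level higher (the cases $h\ge m-d+1$ are handled separately by checking $z_k^N>0$ on $\{m-d+1,\dots,m+d\}$).

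The second gap is the sign of the objective, which you correctly flag as the real difficulty but leave open. The quantity $\sum_j a_j f_d(\frac n2+d-1-j)$ is an alternating-sign sum, since $f_d$ changes sign at successive midpoints between its roots, so no term-by-term or coarse bounding argument will settle it. The paper rewrites it as a ${}_3F_2$ at unit argument and applies Dixon's identity to obtain a closed form whose sign can be read off (Lemma~\ref{lem:objective_negative}). Without the explicit weights $a_j$ and this identity (or a substitute for it), the proposal does not establish that the relaxation value is negative, and hence does not prove the theorem.
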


\subsection{Proof of Theorem~\ref{thm:rank_of_poly_2d}}
\paragraph*{The case $d=1$.} The polynomial $f_1(x)$ is connected to the $\textsc{MaxCut}$ problem in the complete graph of $n=2m+1$ vertices in the following way: Let $x \in \set{0,1}^n$ denote any partition of the vertices into two sets in the natural way. Then, the maximal cut is achieved whenever $\sum_i x_i$ is either $m$ or $m+1$, and $m(m+1) - f_1(x)$ counts the edges in the cut. Therefore, the SoS hierarchy is not able to exactly solve the $\textsc{MaxCut}$ problem if it allows for solutions with negative values of the objective function~\eqref{eq:sos_pol_objective}.

It is shown in~\cite{KurpiszLM16} that\footnote{The same solution was earlier considered in different basis by~\cite{Grigoriev01b,Laurent03a} for the \textsc{Knapsack} and \textsc{MaxCut} problems respectively to show that the SoS hierarchy does not exactly solve the aforementioned problems at level $\lfloor \frac{n}{2}\rfloor$.} 
\begin{equation} \label{eq:Max_Cut_solution}
y_{I}^N[\alpha]=(n+1)\binom{\alpha}{n+1}\frac{(-1)^{n-|I|}}{\alpha-|I|} \qquad \forall I \subseteq N
\end{equation}
is a feasible solution to the SoS hierarchy (as given in Definition~\ref{def:sos_definition_polynomials}) at level $\lfloor \alpha \rfloor$ for any non-integer $0 < \alpha < \frac{n}{2}$. Since the value of the solution only depends on the size of the set $I$, we denote by $y_k^N[\alpha]$ any $y_I^N[\alpha]$ with $|I|=k$. As a consequence of the proof in~\cite{KurpiszLM16} it follows that for any non-integer $0 < \alpha \leq n$, $\sum_{i=0}^n \binom{n}{k}y_k^N[\alpha]=1$. Furthermore, it is shown that the objective function attains the value $\sum_{k=0}^n \binom{n}{k}y_k^N[\alpha] f_1(k)=f_1(\alpha)$ and that in particular for $\alpha = \frac{n}{2}$,  $f_1(\alpha) = -\frac{1}{4}$ at level $t=m$. Next we generalize this approach to $f_d(x)$. 

\paragraph*{Polynomials of degree 2d.} Consider the following solution
\begin{equation}
\label{eq:polynomial_solution}
z_k^N=(2d-2)!(n+1)\binom{\frac{n}{2}-d+1}{n+1}\frac{(-1)^{n-k}}{(\frac{n}{2}+d-1-k)^{\underline{2d-1}}} \qquad \forall k \in \{0,\ldots,n\}
\end{equation}
We show that for this solution, the SoS hierarchy objective~\eqref{eq:sos_pol_objective} attains a negative value (see Lemma~\ref{lem:objective_negative}) and~\eqref{eq:sos_variables_pol} is satisfied. For convenience, we do not actually show that~\eqref{eq:sos_sum_condition_pol} is satisfied and in fact it is not. We show, however, that $\sum_{k=0}^n \binom{n}{k}z_k > 0$, which implies that with proper normalization also~\eqref{eq:sos_sum_condition_pol} can be satisfied (see Lemma~\ref{lemma:positivity_of_z}). 

First we prove that the solution $z_k^N$ can be written as a conical combination of the solutions $y_k^N[\cdot]$ in~\eqref{eq:Max_Cut_solution}. We begin with the following lemma about partial fraction decompositions.
\begin{lemma}
\label{lem:partial_fraction_decomposition}
For any $b \in \mathbb{N}_+$ and $a \in \mathbb{R}$ the following identity holds
$$\frac{1}{\left(x-a\right)^{\underline{b}} }=\sum_{i=0}^{b-1}\frac{(-1)^{b-1-i}}{i!(b-1-i)!} \frac{1}{(x-a-i)}$$
\end{lemma}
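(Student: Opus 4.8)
The plan is to prove the identity
$$\frac{1}{(x-a)^{\underline{b}}}=\sum_{i=0}^{b-1}\frac{(-1)^{b-1-i}}{i!\,(b-1-i)!}\cdot\frac{1}{x-a-i}$$
by the standard partial fraction technique. Write $y = x-a$ so that the left-hand side becomes $\frac{1}{y(y-1)\cdots(y-(b-1))}$, a rational function with $b$ simple poles at $y = 0,1,\ldots,b-1$. Hence there exist constants $c_0,\ldots,c_{b-1}$ with
$$\frac{1}{y(y-1)\cdots(y-(b-1))}=\sum_{i=0}^{b-1}\frac{c_i}{y-i},$$
and it suffices to compute $c_i$ and check it equals $\frac{(-1)^{b-1-i}}{i!\,(b-1-i)!}$.

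To extract $c_i$, multiply both sides by $(y-i)$ and let $y\to i$: the left-hand side gives $c_i = \prod_{j\neq i}\frac{1}{i-j}$ where $j$ ranges over $\{0,\ldots,b-1\}\setminus\{i\}$. Splitting the product into $j<i$ and $j>i$, the factors with $j<i$ contribute $\prod_{j=0}^{i-1}(i-j) = i!$, and the factors with $j>i$ contribute $\prod_{j=i+1}^{b-1}(i-j) = (-1)^{b-1-i}\prod_{\ell=1}^{b-1-i}\ell = (-1)^{b-1-i}(b-1-i)!$. Therefore $c_i = \frac{1}{i!\,(-1)^{b-1-i}(b-1-i)!} = \frac{(-1)^{b-1-i}}{i!\,(b-1-i)!}$, which is exactly the claimed coefficient. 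Substituting back $y = x-a$ yields the statement.

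This argument is entirely routine; there is no real obstacle. The only point requiring a little care is the sign bookkeeping in the product $\prod_{j>i}(i-j)$, where one must track that there are $b-1-i$ negative factors. An alternative, equally short route is to induct on $b$: the base case $b=1$ is trivial, and the inductive step uses $\frac{1}{(x-a)^{\underline{b}}} = \frac{1}{x-a-(b-1)}\cdot\frac{1}{(x-a)^{\underline{b-1}}}$ together with the elementary identity $\frac{1}{(u-j)(u-(b-1))} = \frac{1}{b-1-j}\left(\frac{1}{u-(b-1)}-\frac{1}{u-j}\right)$ and Pascal-type manipulation of the binomial-like coefficients $\frac{1}{i!\,(b-1-i)!} = \frac{1}{(b-1)!}\binom{b-1}{i}$; but the direct residue computation is cleaner, so I would present that one.
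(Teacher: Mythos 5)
Your residue computation is exactly the paper's argument: the paper invokes the standard decomposition $\frac{P(x)}{Q(x)}=\sum_i\frac{P(a_i)}{Q'(a_i)}\frac{1}{x-a_i}$ with $Q(x)=\prod_{i=0}^{b-1}(x-a-i)$, and evaluating $Q'(a+i)=\prod_{j\neq i}(i-j)$ is the same product-splitting and sign bookkeeping you carry out after the substitution $y=x-a$. The proposal is correct and essentially identical to the paper's proof.
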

\begin{proof}
It is known that given two polynomials $P(x)$ and $Q(x) = (x-a_1)(x-a_2) \cdots (x-a_n)$, where the $a_i$ are distinct constants and deg $P < n$, the rational polynomial $\frac{P(x)}{Q(x)}$ can be decomposed into
$$\frac{P(x)}{Q(x)} = \sum_{i=1}^n \frac{P(a_i)}{Q'(a_i)}\frac{1}{(x-a_i)}$$
where $Q'(x)$ is the derivative of $Q(x)$. In our case, since $P(x)=1$ and $Q(x)=\prod_{i=0}^{b-1} (x-a-i)$, we get
$$
\frac{1}{\left(x-a\right)^{\underline{b}} }=\sum_{i=0}^{b-1}\frac{1}{\prod_{j\neq i}(a+i-(a+j))} \frac{1}{(x-a-i)} =\sum_{i=0}^{b-1}\frac{(-1)^{b-1-i}}{i!(b-1-i)!} \frac{1}{(x-a-i)}
$$
\end{proof}

Now we can express the solution $z_k^N$ as a conical combination of the solutions $y_k^N[\cdot]$.
\begin{lemma}
\label{lem:decomposition_to_Max_Cut_solutions}
The solution~\eqref{eq:polynomial_solution} can be decomposed as a conical combination of $y_k^N[\cdot]$:
$$
z_k^N=\sum_{j=0}^{2d-2}  a_j  y_k^N[n/2+d-1-j] \qquad \forall k \in \{0,\ldots,n\}
$$
for positive
$$
a_j=\binom{2d-2}{j} \frac{(\frac{n}{2}+d-1)^{\underline{j}}}{(\frac{n}{2}-d+1+j)^{\underline{j}}}
$$
\end{lemma}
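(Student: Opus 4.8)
The plan is to write the falling-factorial denominator of $z^N_k$ in~\eqref{eq:polynomial_solution} as a sum of simple fractions via Lemma~\ref{lem:partial_fraction_decomposition}, and then to recognise each resulting term as a scalar multiple of one of the solutions $y^N_k[\cdot]$ from~\eqref{eq:Max_Cut_solution}. Concretely, set $\beta = \frac{n}{2}+d-1$ and apply Lemma~\ref{lem:partial_fraction_decomposition} with $a=0$, $b = 2d-1$ and the variable $x$ set to $\beta - k$, giving $\frac{1}{(\beta-k)^{\underline{2d-1}}} = \sum_{j=0}^{2d-2}\frac{(-1)^{2d-2-j}}{j!\,(2d-2-j)!}\,\frac{1}{(\beta-j)-k}$. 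Since $2d-2$ is even this coefficient equals $(-1)^j/(j!\,(2d-2-j)!)$, and multiplying by the $(2d-2)!$ in~\eqref{eq:polynomial_solution} turns it into $(-1)^j\binom{2d-2}{j}$.

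Substituting this expansion into~\eqref{eq:polynomial_solution} and distributing the common prefactor $(2d-2)!\,(n+1)\binom{n/2-d+1}{n+1}$ over the sum, each summand becomes $(n+1)\binom{n/2-d+1}{n+1}(-1)^j\binom{2d-2}{j}\frac{(-1)^{n-k}}{(\beta-j)-k}$. Comparing with~\eqref{eq:Max_Cut_solution} evaluated at $\alpha = \beta - j = \frac{n}{2}+d-1-j$ — a half-integer, hence never equal to an integer $k$, so everything is well defined — this summand is exactly $a_j\, y^N_k[\tfrac{n}{2}+d-1-j]$ with $a_j = (-1)^j\binom{2d-2}{j}\,\frac{\binom{n/2-d+1}{n+1}}{\binom{n/2+d-1-j}{n+1}}$. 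This already establishes the claimed identity; it remains to bring $a_j$ into the stated closed form and to verify $a_j>0$.

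For this I would massage the ratio of generalised binomial coefficients. Writing $t := 2d-2-j \ge 0$, apply the upper-negation identity $\binom{x}{n+1} = (-1)^{n+1}\binom{n-x}{n+1}$ to both numerator and denominator (the $(-1)^{n+1}$ factors cancel), replacing the arguments $\frac{n}{2}-d+1$ and $\frac{n}{2}+d-1-j$ by $\beta$ and $\beta - t$ respectively; then cancel the $n+1-t$ common factors of the falling factorials $\beta^{\underline{n+1}}$ and $(\beta-t)^{\underline{n+1}}$ and reorganise the leftover factors, using $\beta - n - 1 = -(\frac{n}{2}-d+2)$ and $\frac{n}{2}+d-1-j = \beta - j = (\frac{n}{2}-d+1)+t$. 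The outcome is $a_j = (-1)^{j+t}\binom{2d-2}{j}\frac{\beta^{\underline t}}{(\beta-j)^{\underline t}}$; since $j+t = 2d-2$ is even the sign is $+1$, and a final use of $x^{\underline{a+b}} = x^{\underline a}(x-a)^{\underline b}$ (which gives $\beta^{\underline j}(\beta-j)^{\underline t} = \beta^{\underline t}(\beta-t)^{\underline j}$) rewrites $\frac{\beta^{\underline t}}{(\beta-j)^{\underline t}}$ as $\frac{\beta^{\underline j}}{(\beta-t)^{\underline j}} = \frac{(n/2+d-1)^{\underline j}}{(n/2-d+1+j)^{\underline j}}$, matching the statement. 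Positivity is then immediate: for $0\le j\le 2d-2$, every factor of both $(n/2+d-1)^{\underline j}$ and $(n/2-d+1+j)^{\underline j}$ is at least $\frac{n}{2}-d+2 = m-d+\frac{5}{2} > 0$ because $m \ge d$.

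The falling-factorial bookkeeping in the last paragraph is routine; the one point that really matters is the parity observation $j \equiv t \pmod 2$, which is exactly what makes the $(-1)^j$ produced by the partial-fraction expansion cancel against the sign carried by the ratio of the (non-integer-argument) binomial coefficients. Getting that cancellation right — equivalently, checking that $\frac{\binom{n/2-d+1}{n+1}}{\binom{n/2+d-1-j}{n+1}}$ has sign exactly $(-1)^{2d-2-j}$ — is the crux; everything else is manipulation for which we already have the tools (Lemma~\ref{lem:partial_fraction_decomposition} and elementary falling-factorial identities).
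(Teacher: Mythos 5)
Your proof is correct and follows essentially the same route as the paper: apply Lemma~\ref{lem:partial_fraction_decomposition} to split $1/(\frac{n}{2}+d-1-k)^{\underline{2d-1}}$ into simple fractions, match each term with $y_k^N[\frac{n}{2}+d-1-j]$, and then simplify the resulting coefficient via binomial/falling-factorial identities (the paper rewrites $\binom{n/2-d+1}{n+1}$ in terms of $\binom{n/2+d-1-j}{n+1}$ directly, you use upper negation and cancellation, but these are the same computation). The sign bookkeeping and the positivity argument both check out.
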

\begin{proof}
By Lemma~\ref{lem:partial_fraction_decomposition} we get that 
$$
\frac{1}{(\frac{n}{2}+d-1-k)^{\underline{2d-1}}}=\sum_{j=0}^{2d-2}\frac{(-1)^{2d-2-j}}{j!(2d-2-j)!} \cdot \frac{1}{(\frac{n}{2}+d-1-k-j)}
$$
and by writing
$$
\binom{\frac{n}{2}-d+1}{n+1}=     \frac{(-\frac{n}{2}+d-2-j)^{\underline{2d-2-j}}}{(\frac{n}{2}+d-1-j)^{\underline{2d-2-j}}}     \binom{\frac{n}{2}+d-1-j}{n+1}
$$
and using raising factorial notation, $\fallfac{(-b)}{a}=(-1)^a b^{\overline{a}}$, we get that
\begin{align*}
z_k^N&=\sum_{j=0}^{2d-2}\frac{(2d-2)!}{j!(2d-2-j)!}   \frac{(-\frac{n}{2}+d-2-j)^{\underline{2d-2-j}}}{(\frac{n}{2}+d-1-j)^{\underline{2d-2-j}}}(n+1)\binom{\frac{n}{2}+d-1-j}{n+1}  \cdot \frac{(-1)^{2d-2-j+n-k}}{(\frac{n}{2}+d-1-k-j)}\\
&=\sum_{j=0}^{2d-2} \binom{2d-2}{j}       \frac{(\frac{n}{2}-d+2+j)^{\overline{2d-2-j}}}{(\frac{n}{2}+d-1-j)^{\underline{2d-2-j}}}   y_k^N\left[\frac{n}{2}+d-1-j\right]\\
&=\sum_{j=0}^{2d-2} \binom{2d-2}{j}       \frac{(\frac{n}{2}+d-1)^{\underline{j}}}{(\frac{n}{2}-d+1+j)^{\underline{j}}}   y_k^N\left[\frac{n}{2}+d-1-j\right]
\end{align*}
\end{proof}

\begin{lemma} \label{lemma:positivity_of_z}
We have $\sum_{k=0}^n \binom{n}{k} z_k^N > 0$ for every odd $n$, $n=2m+1$, and $d\in [m]$.
\end{lemma}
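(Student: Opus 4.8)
The plan is to exploit the conical decomposition from Lemma~\ref{lem:decomposition_to_Max_Cut_solutions}, which writes $z_k^N = \sum_{j=0}^{2d-2} a_j\, y_k^N[n/2+d-1-j]$ with all $a_j > 0$. Since the coefficients $a_j$ are strictly positive, it suffices to show that $\sum_{k=0}^n \binom{n}{k} y_k^N[\alpha] > 0$ for each value $\alpha = n/2 + d - 1 - j$ appearing in the combination, i.e. for $\alpha \in \{n/2 - d + 1,\, n/2 - d + 2,\, \ldots,\, n/2 + d - 1\}$; summing these positive quantities weighted by the $a_j$ then gives the claim. Note that $n$ is odd, so each such $\alpha$ is a half-integer, hence non-integer, and (for $d \in [m]$) lies in $(0, n]$, so the relevant identity from~\cite{KurpiszLM16} quoted in the $d=1$ paragraph — namely $\sum_{k=0}^n \binom{n}{k} y_k^N[\alpha] = 1$ for non-integer $0 < \alpha \le n$ — would immediately finish the proof, giving in fact $\sum_k \binom{n}{k} z_k^N = \sum_j a_j > 0$.

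The cautious route, in case one does not want to invoke that identity as a black box for the full range of $\alpha$, is to compute $S(\alpha) := \sum_{k=0}^n \binom{n}{k} y_k^N[\alpha]$ directly. Substituting~\eqref{eq:Max_Cut_solution},
\[
S(\alpha) = (n+1)\binom{\alpha}{n+1} \sum_{k=0}^n \binom{n}{k} \frac{(-1)^{n-k}}{\alpha - k}.
\]
The inner alternating sum is a standard one: using $\frac{1}{\alpha-k} = \int_0^1 t^{\alpha-k-1}\,dt$ (for $\alpha$ in the right range) and the binomial theorem $\sum_k \binom{n}{k}(-1)^{n-k} t^{-k} = (1/t - 1)^n \cdot$ (sign bookkeeping), or more cleanly via the known identity $\sum_{k=0}^n \binom{n}{k}\frac{(-1)^k}{\alpha-k} = \frac{n!}{\alpha(\alpha-1)\cdots(\alpha-n)} = \frac{n!}{\alpha^{\underline{n+1}}}$, one gets $\sum_{k=0}^n \binom{n}{k}\frac{(-1)^{n-k}}{\alpha-k} = \frac{(-1)^n n!}{\alpha^{\underline{n+1}}}$. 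Since $(n+1)\binom{\alpha}{n+1} = \frac{(n+1)!}{(n+1)!}\cdot\frac{\alpha^{\underline{n+1}}}{1} \cdot \frac{1}{n!} \cdot n! $ — more precisely $(n+1)\binom{\alpha}{n+1} = \frac{\alpha^{\underline{n+1}}}{n!}$ — the product telescopes to $S(\alpha) = (-1)^n$. For odd $n$ this would give $S(\alpha) = -1$, i.e. the \emph{wrong} sign, which signals that I must be careful with the sign conventions in~\eqref{eq:Max_Cut_solution}; the correct normalization in~\cite{KurpiszLM16} yields $S(\alpha) = 1$, and reconciling the sign is exactly the bookkeeping step to get right.

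So the key steps, in order, are: (i) invoke Lemma~\ref{lem:decomposition_to_Max_Cut_solutions} to reduce to the single-$\alpha$ sums with positive weights $a_j$; (ii) check that each $\alpha = n/2 + d - 1 - j$ for $j \in \{0,\ldots,2d-2\}$ is a non-integer in the admissible range $(0,n]$, which needs $d \le m$ (for the smallest $\alpha = n/2 - d + 1 = m - d + 1 + \tfrac12 > 0$) — this is where the hypothesis $d \in [m]$ enters; (iii) evaluate $\sum_k \binom{n}{k} y_k^N[\alpha]$, either by quoting the identity from~\cite{KurpiszLM16} or by the alternating-sum computation above, obtaining the value $1$; (iv) conclude $\sum_k \binom{n}{k} z_k^N = \sum_{j=0}^{2d-2} a_j > 0$. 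The main obstacle is step (iii): getting the sign and normalization of the evaluated sum exactly right, since the raw formula in~\eqref{eq:Max_Cut_solution} is sensitive to the $(-1)^{n-k}$ factor and to the precise definition of $\binom{\alpha}{n+1}$ for non-integer $\alpha$; if one prefers to avoid re-deriving this, simply citing the consequence of the proof in~\cite{KurpiszLM16} (already quoted in the $d=1$ paragraph) dispatches it cleanly.
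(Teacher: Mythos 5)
Your primary argument---decomposing $z_k^N$ via Lemma~\ref{lem:decomposition_to_Max_Cut_solutions} into a positive combination of the $y_k^N[\alpha]$ and invoking $\sum_{k=0}^n \binom{n}{k}y_k^N[\alpha]=1$ for non-integer $\alpha\in(0,n]$---is exactly the paper's proof, and it is correct. The optional ``cautious route'' contains a sign slip (you arrive at $S(\alpha)=(-1)^n$ rather than $1$), but since you flag it and fall back on citing the identity from~\cite{KurpiszLM16}, the proof as presented stands.
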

\begin{proof}
The proof follows by recalling that for every $\alpha \in [0,n]\setminus \mathbb{Z}$, $\sum_{i=0}^n \binom{n}{k}y_k^N[\alpha]=1$ and by the fact that all the coefficients in the decomposition in Lemma~\ref{lem:decomposition_to_Max_Cut_solutions} are positive. 
\end{proof}

Now we show that the solution~\eqref{eq:polynomial_solution} is a feasible solution for the SoS hierarchy at level $t=m+d-1$. The solution~\eqref{eq:polynomial_solution} is symmetric, and so by Theorem~\ref{thm:PSD_as_polynomial} (see~\eqref{eq:sym_psd_cond}) is enough to prove that for $t=m+d-1$,
$$
\sum_{k=0}^n \binom{n}{k} z_k^N G_h(k) \geq 0\qquad \forall G_h(k) \in \mathcal{S}_t
$$

We first note that the solution~\eqref{eq:polynomial_solution} attains positive values for every integer $k \in \{m-d+1,\ldots, m+d\}$. Indeed, for $k = m-d+1+p$ for $p=\{0,\ldots,2d-1\}$, since 
$$\fallfac{\left(\frac{n}{2}-d+1\right)}{n+1}=\fallfac{\left(\frac{n}{2}-d+1\right)}{m-d+2}\left(\frac{1}{2}\right)^{\overline{m+d}}\left(-1\right)^{m+d}$$ 
and for $0\leq p \leq 2d-1$
$$\fallfac{\left(\frac{n}{2}+d-1-k\right)}{2d-1}=\fallfac{\left(2d-\frac{3}{2}-p\right)}{2d-1-p}\left(\frac{1}{2}\right)^{\overline{p}}\left(-1\right)^{p}$$ 
the only, not obviously, non-negative part of $z_k^N$ is
$$
\frac{(-1)^{m+d}(-1)^{m+d-p}}{(-1)^p}
$$
which is always positive.
 Thus the above (see~\eqref{eq:sym_psd_cond})  is always satisfied whenever $h \geq  m-d+1$ by the definition of the polynomials $G_h \in \mathcal{S}_t$.

It follows that it is enough to prove that the above is satisfied for $h \leq m-d$ which is implied if the following is true
$$
\sum_{k=0}^n \binom{n}{k} z_k^N P(k) \geq 0
$$
for every polynomial $P(x) \in \mathbb{R}[x]_{2t}$ that is nonnegative in the interval $[m-d+1, m+d]$.
\begin{lemma}
\label{lem:polynomial_main_thm}
For any polynomial $P(x) \in \mathbb{R}[x]_{2(m+d-1)} $ we have
$$
\sum_{k=0}^n \binom{n}{k} z_k^N P(k)= \sum_{j=0}^{2d-2} 
a_j
P\left(\frac{n}{2}+d-1-j\right)
$$
\end{lemma}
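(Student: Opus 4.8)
The plan is to exploit the decomposition from Lemma~\ref{lem:decomposition_to_Max_Cut_solutions} together with the known evaluation behavior of the \textsc{MaxCut}-type solutions $y_k^N[\alpha]$ against arbitrary polynomials. Concretely, write
$$
\sum_{k=0}^n \binom{n}{k} z_k^N P(k) = \sum_{j=0}^{2d-2} a_j \sum_{k=0}^n \binom{n}{k} y_k^N\left[\tfrac{n}{2}+d-1-j\right] P(k),
$$
using linearity and the fact that $z_k^N = \sum_{j=0}^{2d-2} a_j\, y_k^N[n/2+d-1-j]$ for every $k$. So it suffices to prove the single-$\alpha$ identity
$$
\sum_{k=0}^n \binom{n}{k} y_k^N[\alpha]\, P(k) = P(\alpha)
$$
for every non-integer $\alpha \in (0,n]$ and every $P \in \R[x]_{2t}$; then plugging in $\alpha = n/2+d-1-j$ gives exactly $\sum_j a_j P(n/2+d-1-j)$, which is the claim. (Note each $\alpha_j = n/2+d-1-j$ is a half-integer, hence non-integer, and lies in $(0,n]$ for $d \in [m]$; the degree bound $2t = 2(m+d-1) = n+2d-3$ is what we must respect.)

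The key step is therefore establishing that identity. First I would reduce to a basis of $\R[x]_{2t}$; the natural choice is the falling-factorial basis $\{x^{\underline{r}} : 0 \le r \le 2t\}$, or equivalently the evaluation/Newton basis. Since $\sum_{k=0}^n \binom{n}{k} y_k^N[\alpha] = 1$ is already recorded in the excerpt (the $r=0$ case), it remains to show $\sum_{k=0}^n \binom{n}{k} y_k^N[\alpha]\, k^{\underline{r}} = \alpha^{\underline{r}}$ for $1 \le r \le 2t$. Substituting the closed form $y_k^N[\alpha] = (n+1)\binom{\alpha}{n+1}\frac{(-1)^{n-k}}{\alpha-k}$, this becomes a binomial identity of the shape
$$
(n+1)\binom{\alpha}{n+1}\sum_{k=0}^n \binom{n}{k}\frac{(-1)^{n-k}\, k^{\underline{r}}}{\alpha-k} = \alpha^{\underline{r}},
$$
which I would verify by the standard finite-difference / partial-fraction argument: $\sum_{k=0}^n \binom{n}{k}(-1)^{n-k}\frac{Q(k)}{\alpha-k}$ equals $\frac{Q(\alpha)}{\binom{\alpha}{n}\,\text{(normalizing factor)}}$ whenever $\deg Q \le n$ — essentially because the $n$-th finite difference of $x \mapsto Q(x)/(\alpha - x)$, viewed appropriately, isolates $Q(\alpha)$. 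This is precisely the mechanism already used in the excerpt to show $\sum_k \binom{n}{k} y_k^N[\alpha] f_1(k) = f_1(\alpha)$ for the degree-$2$ polynomial $f_1$, so the work is to check it goes through for all monomials up to degree $2t \le n + 2d - 3$ rather than just degree $2$. The constraint $\deg Q \le n$ in the partial-fraction lemma is the place to be careful: $k^{\underline{r}}$ has degree $r$ which can exceed $n$, so one cannot apply the lemma blindly. The fix is that the identity is only claimed for $P$ of degree $\le 2t$, and each $\alpha_j$ is used only through $y^N[\alpha_j]$ at level $t$; more carefully, one should decompose $P$ in the Newton basis centered so that the relevant polynomials have controlled degree, or observe that on the $n+1$ integer points $k \in \{0,\dots,n\}$ any $P$ agrees with a polynomial of degree $\le n$, and the sum $\sum_k \binom{n}{k} y_k^N[\alpha] P(k)$ only sees those $n+1$ values — so we may replace $P$ by its degree-$\le n$ interpolant $\tilde P$, and then $\sum_k \binom{n}{k} y_k^N[\alpha] P(k) = \tilde P(\alpha)$; finally one argues $\tilde P(\alpha_j) = P(\alpha_j)$ using that $\deg P \le 2t$ and the specific location of the $\alpha_j$, or simply that this is exactly the content already invoked in the $d=1$ case and its generalization.

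I expect the main obstacle to be bookkeeping the degree condition: reconciling "$P$ has degree up to $2t = n+2d-3$'' with "partial-fraction interpolation only controls polynomials of degree $\le n$.'' The cleanest resolution is probably to prove directly that $\sum_{k=0}^n \binom{n}{k} y_k^N[\alpha] x^{\underline{r}}\big|_{x=k} = \alpha^{\underline{r}}$ holds for all $r \le n$ (which is the genuine interpolation statement, valid for all such $r$) and then handle $r$ with $n < r \le 2t$ separately — but in fact those higher-degree terms never arise, because in Lemma~\ref{lem:polynomial_main_thm} the polynomial $P$ is applied to the \emph{specific} symmetric solution $z^N$, and the relevant computation is the one feeding into the feasibility proof at level $t = m+d-1$, where only $P \in \R[x]_{2t}$ evaluated at the $n+1$ lattice points matters. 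So the honest claim to prove is: for $P \in \R[x]_{2t}$, $\sum_{k=0}^n \binom{n}{k} z_k^N P(k)$ depends only on $\{P(k) : k = 0,\dots,n\}$ and equals $\sum_j a_j P(\alpha_j)$; since the $\alpha_j$ are distinct from all integers in $[0,n]$ and the map $P \mapsto (\text{values at } 0,\dots,n)$ is not injective on $\R[x]_{2t}$ for $2t \ge n+1$, one must additionally use that $z^N$ itself is supported (as a functional) in a way compatible with the $y^N[\alpha_j]$ — which is exactly what Lemma~\ref{lem:decomposition_to_Max_Cut_solutions} gives, reducing everything back to the single-$\alpha$ identity for $\alpha$ of the form $n/2 + d - 1 - j$. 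I would therefore lead with the single-$\alpha$ identity for all $\alpha \in (0,n]\setminus\Z$ and all $P$ of degree $\le n$, extend it to degree $\le 2t$ by noting (as in the $d=1$ case) that the extra freedom is absorbed by the structure of $y^N[\alpha]$ at level $t$, and then sum against the $a_j$.
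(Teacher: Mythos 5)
There is a genuine gap, and it sits exactly where you flagged the danger. Your plan reduces the lemma to the single-$\alpha$ identity $\sum_{k=0}^n \binom{n}{k} y_k^N[\alpha]P(k)=P(\alpha)$ for all $P\in\R[x]_{2t}$ with $t=m+d-1$, but that identity is simply \emph{false} once $\deg P>n$: writing $P(k)=(\alpha-k)S(k)+P(\alpha)$, the term $\sum_k\binom{n}{k}(-1)^{n-k}S(k)$ vanishes only when $\deg S\leq n-1$, i.e.\ $\deg P\leq n$, whereas here $2t=n+2d-3>n$ for every $d\geq 2$. (A direct check with $n=3$, $\alpha=3/2$, $P(k)=k^4$ gives $4.5\neq\alpha^4=5.0625$.) Your proposed repair --- replace $P$ by its degree-$\leq n$ interpolant $\tilde P$ on $\{0,\dots,n\}$ and then claim $\tilde P(\alpha_j)=P(\alpha_j)$ --- does not work: $P-\tilde P$ is a nonzero multiple of $\prod_{k=0}^n(x-k)$, which does not vanish at the half-integer points $\alpha_j=\frac{n}{2}+d-1-j$, so $\tilde P(\alpha_j)\neq P(\alpha_j)$ in general. (It is true that $\sum_j a_j(P-\tilde P)(\alpha_j)=0$, but that is a nontrivial cancellation across the $j$'s --- essentially equivalent to the lemma itself --- not something you can get from the single-$\alpha$ identity, and the remaining suggestions ``absorbed by the structure of $y^N[\alpha]$'' are circular.)

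The paper resolves this by reversing the order of your two steps: it first divides $P$ by $g(k)=\fallfac{(\frac{n}{2}+d-1-k)}{2d-1}$, the \emph{full} denominator of $z_k^N$, writing $P=gQ+R$. The quotient term dies against $z^N$ directly --- not via the $y^N[\alpha_j]$ --- because $z_k^N g(k)$ is a constant multiple of $(-1)^{n-k}$ and $\deg Q\leq 2t-(2d-1)=n-2$, so $\sum_k\binom{n}{k}z_k^N g(k)Q(k)=0$. Only the remainder $R$, of degree at most $2d-2$, is then pushed through the decomposition of Lemma~\ref{lem:decomposition_to_Max_Cut_solutions}, where the single-$\alpha$ identity applies safely and returns $R(\alpha_j)=P(\alpha_j)$ because the $\alpha_j$ are precisely the roots of $g$. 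Note also that this is exactly where the level bound bites: for $t>m+d-1$ one would have $\deg Q\geq n$ and the quotient term would no longer vanish, which is why the lemma (and the lower bound) stops at $t=m+d-1$. Your write-up, by decomposing into the $y^N[\alpha_j]$ before controlling the degree, loses access to this cancellation.
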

\begin{proof}

Let $g(k)=(\frac{n}{2}+d-1-k)^{\underline{2d-1}}$ be the polynomial of degree $2d-1$ that corresponds to the denominator in polynomial in $z_k^N$ (see~\eqref{eq:polynomial_solution}). By the polynomial remainder theorem, $P(k)=g(k)Q(k)+R(k)$, where the $Q(k)$ is the unique polynomial of degree at most $\deg(P)-\deg(g) \leq n-2$, and for the remainder it holds $R(r)=P(r)$ for all the roots $r$ of polynomial $g(k)$. Then
$$
\sum_{k=0}^n \binom{n}{k} z_{k}^N   P(k) = \sum_{k=0}^n \binom{n}{k} z_{k}^N g(k)Q(k)+\sum_{k=0}^n \binom{n}{k} z_{k}^N R(k)
$$
Here $\sum_{k=0}^n \binom{n}{k} z_{k}^N g(k)Q(k) = 0$, as $\sum_{k=0}^n (-1)^k\binom{n}{k} k^c=0$ for every $c\leq n-1$. We remark here that if the level $t$ is greater than $m+d-1$, then the polynomial $Q$ can be of degree $n$ or more and this reasoning fails. 

By Lemma~\ref{lem:decomposition_to_Max_Cut_solutions} we can write the sum with the remainder polynomial $R(k)$ as
$$
\sum_{j=0}^{2d-2} a_j \sum_{k=0}^n \binom{n}{k}  y_k^N\left[\frac{n}{2}+d-1-j\right] R(k) 
$$
and, again by the polynomial reminder theorem, for every $j\in \{0,\ldots,2d-2\}$, $R(k)=(\frac{n}{2}+d-1-j-k)S_j(k)+R(\frac{n}{2}+d-1-j)$ and as before, since the degree of $R$ is less or equal to $2d-2$, we have $\sum_{i=0}^n (-1)^k\binom{n}{k}S_j(k)=0$. Thus, since $R(r)=P(r)$ for all the roots $r$ of the polynomial $g(k)$, the above reduces to
\begin{align*}
&\sum_{j=0}^{2d-2} 
a_j
\left(  R\left(\frac{n}{2}+d-1-j\right)   \underbrace{  \sum_{k=0}^n \binom{n}{k}  y_k^N\left[\frac{n}{2}+d-1-j\right]}_{=1} \right)\\
=&\sum_{j=0}^{2d-2} 
a_j
P\left(\frac{n}{2}+d-1-j\right)
\end{align*}
\end{proof}

By Lemma~\ref{lem:polynomial_main_thm} we immediately obtain the following corollary.

\begin{corollary}
\label{cor:polynomial_main_thm}
For any polynomial $P(x) \in \mathbb{R}[x]_{2(m+d-1)} $ such that $P(x) \geq 0$ for $x \in [m-d+1, m+d]$ we have
$$
\sum_{k=0}^n \binom{n}{k} z_k^N P(k) \geq 0
$$
\end{corollary}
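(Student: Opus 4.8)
The plan is to read the corollary off directly from Lemma~\ref{lem:polynomial_main_thm} together with the positivity of the coefficients $a_j$. First I would apply Lemma~\ref{lem:polynomial_main_thm}: since $P(x)\in\mathbb{R}[x]_{2(m+d-1)}$, it yields the exact identity
$$\sum_{k=0}^n \binom{n}{k} z_k^N P(k) \;=\; \sum_{j=0}^{2d-2} a_j\, P\!\left(\tfrac{n}{2}+d-1-j\right),$$
so it remains only to check that the right-hand side is a nonnegative combination of nonnegative numbers.

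For the coefficients, I would recall from Lemma~\ref{lem:decomposition_to_Max_Cut_solutions} that $a_j=\binom{2d-2}{j}\,\fallfac{(n/2+d-1)}{j}\big/\fallfac{(n/2-d+1+j)}{j}$, which is strictly positive for every $j\in\{0,\dots,2d-2\}$: with $n=2m+1$ and $d\le m$, every factor occurring in the two falling factorials is a positive real, so the ratio is positive (this is exactly the positivity already asserted in that lemma).

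For the evaluation points, I would substitute $n=2m+1$, so that $\tfrac{n}{2}+d-1-j=m+d-\tfrac12-j$. As $j$ ranges over $\{0,\dots,2d-2\}$ these $2d-1$ values decrease from $m+d-\tfrac12$ down to $m-d+\tfrac32$, hence all of them lie inside the interval $[m-d+1,\,m+d]$ on which $P$ is assumed nonnegative. Therefore $P\!\left(\tfrac{n}{2}+d-1-j\right)\ge 0$ for every $j$, each summand $a_j\,P\!\left(\tfrac{n}{2}+d-1-j\right)$ is nonnegative, and the claimed inequality $\sum_{k=0}^n \binom{n}{k} z_k^N P(k)\ge 0$ follows.

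There is essentially no obstacle at this stage: all of the substantive work lies in Lemma~\ref{lem:polynomial_main_thm}, which exploited the degree bound $\deg P\le 2(m+d-1)$ so that the polynomial-division quotient has degree at most $n-2$ and is therefore annihilated by the alternating binomial identity $\sum_{k=0}^n(-1)^k\binom{n}{k}k^c=0$ for $c\le n-1$. The only point needing even minimal care is the arithmetic verification that the abscissae $m+d-\tfrac12-j$ fall inside $[m-d+1,m+d]$, and this is precisely the place where the hypothesis that $n$ is odd is used.
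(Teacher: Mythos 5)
Your proposal is correct and follows essentially the same route as the paper: apply Lemma~\ref{lem:polynomial_main_thm} and observe that the right-hand side is a conical combination of values of $P$ at points lying in $[m-d+1,m+d]$. The extra details you supply (explicit positivity of the $a_j$ and the check that the abscissae $m+d-\tfrac12-j$ fall in the interval) are exactly what the paper's one-line justification implicitly relies on.
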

\begin{proof}
By Lemma~\ref{lem:polynomial_main_thm} we have that
$$
\sum_{k=0}^n \binom{n}{k} z_k^N P(k)= \sum_{j=0}^{2d-2} 
a_k
P\left(\frac{n}{2}+d-1-j\right)
$$
which is positive since it is a conical combination of points at which polynomial $P$ is positive.
\end{proof}

It remains to show that the objective value of the SoS hierarchy~\eqref{eq:sos_pol_objective} attains a negative value.

\begin{lemma}
\label{lem:objective_negative}
The sum $\sum_{k=0}^n \binom{n}{k} z_k^N f_d(k)$ is negative for every odd $n=2m+1$, for any positive integer $m$ and $d\in \set{1,...,m}$.
\end{lemma}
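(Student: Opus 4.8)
The plan is to apply Lemma~\ref{lem:polynomial_main_thm} with $P=f_d$. This is legitimate because $\deg f_d=2d\le 2(m+d-1)$ whenever $m\ge 1$, which holds since $1\le d\le m$; the lemma then gives
\[
\sum_{k=0}^n \binom{n}{k} z_k^N f_d(k)=\sum_{j=0}^{2d-2} a_j\, f_d\!\left(\tfrac n2+d-1-j\right),
\]
so it suffices to evaluate the right-hand side and check that it is negative. Note that this is an \emph{alternating} sum with every $a_j>0$ (the sign of $f_d(\tfrac n2+d-1-j)$ will be $(-1)^{j+1}$), so a term-by-term or simple pairing estimate cannot succeed — already for $d=2$ the unique positive summand can exceed the first negative one — and the exact value of the whole sum is needed.

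First I would evaluate $f_d$ at the points $\tfrac n2+d-1-j$. Since $n=2m+1$, the $2d$ factors of the falling factorial $f_d(x)=\fallfac{\left(\|x\|_1+d-m-1\right)}{2d}$ at $x=\tfrac n2+d-1-j$ are exactly the $2d$ consecutive half-integers $\ell-j-\tfrac12$, $\ell=0,\dots,2d-1$, of which precisely $j+1$ are negative; writing the product of the negative factors and of the positive factors as double factorials yields
\[
f_d\!\left(\tfrac n2+d-1-j\right)=(-1)^{j+1}\,\frac{(2j+1)!!\,(4d-3-2j)!!}{2^{2d}}.
\]
Combining this with the closed form of $a_j$ from Lemma~\ref{lem:decomposition_to_Max_Cut_solutions}, the summand $t_j:=a_j f_d(\tfrac n2+d-1-j)$ satisfies $t_0=-(4d-3)!!/2^{2d}$ and
\[
\frac{t_{j+1}}{t_j}=\frac{(j+2-2d)\,\bigl(j-\tfrac n2-d+1\bigr)\,\bigl(j+\tfrac32\bigr)}{(j+1)\,\bigl(j+\tfrac n2-d+2\bigr)\,\bigl(j+\tfrac32-2d\bigr)},
\]
so the sum is a terminating ${}_3F_2$ evaluated at $1$,
\[
\sum_{k=0}^n \binom{n}{k} z_k^N f_d(k)=-\frac{(4d-3)!!}{2^{2d}}\;{}_3F_2\!\left(\begin{matrix} 2-2d,\ -(\tfrac n2+d-1),\ \tfrac32 \\ \tfrac n2-d+2,\ \tfrac32-2d \end{matrix};\,1\right),
\]
whose parameters are \emph{well-poised}: $1+(2-2d)=-(\tfrac n2+d-1)+(\tfrac n2-d+2)=\tfrac32+(\tfrac32-2d)$.

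The last step, which I expect to be the main obstacle, is to invoke Dixon's summation theorem for a well-poised ${}_3F_2(1)$ and read off the sign. The delicate point is that the series terminates (the upper parameter $2-2d$ is a non-positive even integer), so several arguments of the Gamma functions in Dixon's evaluation are non-positive integers; one must use the terminating form of the identity, or equivalently use the reflection formula to rewrite the ratios $\Gamma(2-d)/\Gamma(3-2d)$ and $\Gamma(\tfrac32-2d)/\Gamma(\tfrac12-d)$ as finite products of non-zero factors (alternatively, the resulting closed form can be verified directly by induction on $d$ from the term ratio above). After this the value collapses to a ratio of ordinary and double factorials of non-negative integers; the two sign factors that arise, one of the form $(-1)^{d-1}$ and one of the form $(-1)^{d+1}$, cancel, so the ${}_3F_2$ is strictly positive. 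Carrying this out gives
\[
\sum_{k=0}^n \binom{n}{k} z_k^N f_d(k)=-\frac12\cdot\frac{(2d-3)!\,(2d-1)!!\,(m-1)!\,(2m-2d+3)!!}{(d-2)!\,(m-d)!\,(2m+1)!!}<0
\]
for every $d\ge 2$, where all arguments of the factorials and double factorials are non-negative because $1\le d\le m$; for $d=1$ the series reduces to its single term and the sum equals $-\tfrac14$. In both cases the sum is negative, as claimed. As a check, the formula gives $-\tfrac14$ for $d=1$, recovering the \textsc{MaxCut} value, and $-\tfrac{3(m-1)}{2(2m+1)}$ for $d=2$.
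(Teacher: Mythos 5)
Your proposal is correct and follows essentially the same route as the paper: apply Lemma~\ref{lem:polynomial_main_thm} with $P=f_d$, recognize the resulting sum as $\fallfac{(2d-3/2)}{2d}$ times a terminating well-poised ${}_3F_2$ at $1$ (your parameters $2-2d$, $-(\tfrac n2+d-1)$, $\tfrac32$ are exactly the paper's $a,b,c$), evaluate by Dixon's identity with the degenerate Gamma ratios handled via Euler's reflection formula, and read off the negative sign from the resulting ratio of factorials and double factorials. Your closed form agrees with the paper's (their $\tfrac{(2d-2)!}{4(d-1)!}$ equals your $\tfrac{(2d-3)!}{2(d-2)!}$ for $d\ge 2$), so there is nothing further to add.
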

\begin{proof}
By Lemma~\ref{lem:polynomial_main_thm}, the solution $z_k^N$ is such that
$$
\sum_{k=0}^n \binom{n}{k} z_k^N f_d(k)= \sum_{j=0}^{2d-2} a_j f_d\left(\frac{n}{2}+d-1-j\right)
$$
Then, the claim is proved by showing that the following function $g(d,n)$ is negative for every odd $n=2m+1$, for any positive integer $m$ and $d\in[m]$. Formally, that
\begin{align*}
 g(d,n)=\sum_{j=0}^{2d-2} \binom{2d-2}{j}       \frac{\fallfac{(\frac{n}{2}+d-1)}{j}}{(\frac{n}{2}-d+1+j)^{\underline{j}}}      \fallfac{ \left(2d-3/2-j\right)}{2d} <0
\end{align*}
More precisely we show that the following identity holds (where $!!$ denotes the double factorial).
\begin{align}\label{eq:hg}
g(d,n)=  \fallfac{(2d-3/2)}{2d} \cdot \frac{4^{d-1} (2 d-2)! (2 d-1)!! }{(d-1)!  (4 d-3)!! }\cdot 
    \frac{(2 m - 2 d+3)!! (m-1)!  }{(m-d)! (2m+1)!!}
\end{align}
By simple inspection it is easy to see that \eqref{eq:hg} is negative and the claim follows.
 
We start by rewriting $g(d,n)$ by using the following (easy to check) identities:
 \begin{align*}
 \binom{2d-2}{j} \fallfac{\left(\frac{n}{2}+d-1\right)}{j} &= \frac{\risefac{(2-2d)}{j} \risefac{\left(1-d-\frac{n}{2}\right)}{j}}{j!}\\
 \fallfac{\left(\frac{n}{2}-d+1+j\right)}{j}&=\risefac{\left(2-d+\frac{n}{2}\right)}{j}\\
 \fallfac{ \left(2d-3/2-j\right)}{2d}&= \fallfac{(2d-3/2)}{2d} \frac{\risefac{(3/2)}{j}}{\risefac{(3/2-2d)}{j}}
 \end{align*}
By the above identities we have that
 \begin{align}
 g(d,n)&=\sum_{j=0}^{2d-2} \binom{2d-2}{j}       \frac{\fallfac{(\frac{n}{2}+d-1)}{j}}{(\frac{n}{2}-d+1+j)^{\underline{j}}}      \fallfac{ \left(2d-3/2-j\right)}{2d} \nonumber \\
 &= \fallfac{(2d-3/2)}{2d} \sum_{j=0}^{2d-2}    \frac{\risefac{(2-2d)}{j} \risefac{\left(1-d-\frac{n}{2}\right)}{j} \risefac{(3/2)}{j}}{\risefac{\left(2-d+\frac{n}{2}\right)}{j} \risefac{(3/2-2d)}{j}}\cdot \frac{1}{j!}\nonumber\\
 &= \fallfac{(2d-3/2)}{2d} \sum_{j=0}^{\infty}    \frac{\risefac{(2-2d)}{j} \risefac{\left(1-d-\frac{n}{2}\right)}{j} \risefac{(3/2)}{j}}{\risefac{\left(2-d+\frac{n}{2}\right)}{j} \risefac{(3/2-2d)}{j}}\cdot \frac{1}{j!}\nonumber\\
 &= \fallfac{(2d-3/2)}{2d} \cdot \pFq{3}{2}{a,b,c}{1+a-b,1+a-c}{1}\label{eq:gdn}
\end{align}
where $\pFq{3}{2}{a,b,c}{1+a-b,1+a-c}{1}=\sum_{j=0}^{\infty}    \frac{\risefac{(a)}{j} \risefac{(b)}{j} \risefac{( c )}{j}}{\risefac{(1+a-b)}{j} \risefac{(1+a-c)}{j}}\cdot \frac{1}{j!}$ is the generalized hypergeometric series with $a = 2 - 2 d$, $b = 1 - d - n/2$ and $c = 3/2$.

Note that $1+a/2-b-c = \frac{n-1}{2}>0$ and
by using Dixon's identity~\cite{dixon1902summation} for the generalized hypergeometric series $\pFq{3}{2}{a,b,c}{1+a-b,1+a-c}{1}$ (when $\Re(1+a/2-b-c)>0$) we have
\begin{align}\label{eq:dixon}
 \pFq{3}{2}{a,b,c}{1+a-b,1+a-c}{1} &=   \frac{\Gamma(1+\frac{1}{2}a)\Gamma(1+a-b)\Gamma(1+a-c)\Gamma(1+\frac{1}{2}a-b-c)}{\Gamma(1+a)\Gamma(1+\frac{1}{2}a-b)\Gamma(1+\frac{1}{2}a-c)\Gamma(1+a-b-c)}\nonumber\\
&= \frac{\Gamma(2 - d)\Gamma(5/2 - d + m)\Gamma(3/2 - 2 d)\Gamma(m)}{\Gamma(3 - 2 d)\Gamma(3/2 + m)\Gamma(1/2 - d)\Gamma(1 - d + m)}\nonumber
\end{align}
Note that $\Gamma(2 - d)=(1-d)\Gamma(1 - d)$ and $\Gamma(3 - 2 d)=2(1-d)(1-2d)\Gamma(1 - 2 d)$.
By using the Euler's reflection formula we have that $\frac{\sin{(\pi d)}}{\pi}=\frac{1}{\Gamma(1-d)\Gamma(d)}$ and $\frac{\sin{(\pi 2d)}}{\pi}=\frac{1}{\Gamma(1-2d)\Gamma(2d)}$, and by the integrality of $d$ we have that
 
$$\frac{\Gamma(1 - d)}{\Gamma(1 - 2 d)}=\frac{\sin{(\pi 2d)} (2d-1)!}{\sin{(\pi d)} (d-1)!}
 =\frac{2\cos{(\pi d)} (2d-1)!}{ (d-1)!}=\frac{2(-1)^{d} (2d-1)!}{ (d-1)!}$$
Therefore
 $$\frac{\Gamma(2 - d)}{\Gamma(3 - 2 d)}=(-1)^{d+1}\frac{ (2d-2)!}{(d-1)!}$$
Recall that for nonnegative integer values of $x$ we have $\Gamma(\frac{1}{2} - x)=\frac{(-2)^x}{(2n-1)!!}\sqrt{\pi}$ and $\Gamma(\frac{1}{2} + x)=\frac{(2x-1)!!}{2^x}\sqrt{\pi}$, and the following holds.
\begin{align*}
 \pFq{3}{2}{a,b,c}{1+a-b,1+a-c}{1} &=(-1)^{d+1}\frac{ (2d-2)!}{(d-1)!} \frac{(m-1)!}{(m-d)!}\cdot \frac{\Gamma(\frac{5}{2} - d + m)\Gamma(\frac{3}{2} - 2 d)}{\Gamma(\frac{3}{2} + m)\Gamma(\frac{1}{2} - d)}\nonumber\\
 &= \frac{4^{d-1} (2 d-2)! (2 d-1)!! }{(d-1)!  (4 d-3)!! }
     \frac{(2 m - 2 d+3)!! (m-1)!  }{(m-d)! (2m+1)!!}
\end{align*}
By simple inspection we see that $\pFq{3}{2}{a,b,c}{1+a-b,1+a-c}{1}$ is always positive and $g(d,n)$, see \eqref{eq:gdn}, is negative as claimed.
\end{proof}

\section{Rank bounds for detecting a particular empty integral hull}

In~\cite{Laurent03} Laurent considers the representation of the empty set as~\eqref{eq:Laurent_empty_set} and shows that the Sherali-Adams procedure requires $n$ levels to detect that $K = \emptyset$. She conjectures that the SoS rank of $K$ is $n-1$. In this section we disprove this conjecture and derive a lower and upper bound for the SoS rank of $K$.

\begin{theorem}\label{thm:rank_bounds_of_K}
The SoS rank of $K$ in~\eqref{eq:Laurent_empty_set} can be bounded by $\Omega(\sqrt{n}) \leq \rho(K) \leq n - \Omega(n^\frac{1}{3})$.
\end{theorem}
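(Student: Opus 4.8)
The plan is to treat the lower and upper bounds separately, in both cases exploiting the symmetry of $K$ and the reduction furnished by Theorem~\ref{thm:PSD_as_polynomial}. Observe first that the set $K$ in~\eqref{eq:Laurent_empty_set} is fully symmetric: the constraint family $\{g_R(x)\geq 0 : R\subseteq N\}$ with $g_R(x)=\sum_{r\in R}x_r+\sum_{r\in N\setminus R}(1-x_r)-\tfrac12$ is invariant under permutations of the coordinates, and by convexity of the feasible region of $\SoS_t(K)$ together with this symmetry we may restrict attention to symmetric solutions $y^N_I=w^N_k$ whenever $|I|=k$. Since $K\cap\{0,1\}^n=\emptyset$, its convex hull is empty, so to prove a lower bound $\rho(K)\geq t+1$ it suffices to exhibit, for the level $t$ in question, a symmetric vector $(w^N_k)_{k=0}^n$ with $\sum_k\binom nk w^N_k=1$ that satisfies the two PSD conditions~\eqref{eq:sos_variables_ih} and~\eqref{eq:sos_constraints_ih}; equivalently, by Theorem~\ref{thm:PSD_as_polynomial}, one must verify $\sum_k\binom nk w^N_k G_h(k)\ge0$ for all $G_h\in\mathcal S_{t+1}$, and $\sum_k\binom nk g_R(x_k)\,w^N_k G_h(k)\ge0$ for all $R$ and all $G_h\in\mathcal S_t$. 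Because $g_R(x_k)$ depends on $R$ only through $|R\cap I|$ for the relevant $I$, the number of genuinely distinct constraints of the second type collapses, and the whole feasibility question becomes a statement about nonnegativity of certain combinatorial sums $\sum_k\binom nk w^N_k\,(\text{linear in }k)\,G_h(k)$, exactly the kind of sum the paper says it will estimate.

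For the \textbf{lower bound $\rho(K)=\Omega(\sqrt n)$}, I would guess a one-parameter family of symmetric candidate solutions supported near a carefully chosen ``fractional point'' — the natural guess, given the knapsack-flavoured nature of $K$, is a shifted version of the Grigoriev/Laurent/Kurpisz–Leppänen–Mastrolilli solution $y^N_k[\alpha]$ from~\eqref{eq:Max_Cut_solution}, or a conical combination of a few of them as in Lemma~\ref{lem:decomposition_to_Max_Cut_solutions}, tuned so that the objective-free constraints $g_R$ are satisfied for $t$ up to order $\sqrt n$. I would then plug this candidate into the two reduced inequalities. The first (the plain PSD condition at level $t+1$) should follow from the same vanishing-of-finite-differences argument used in Lemma~\ref{lem:polynomial_main_thm}: a symmetric solution built from $y^N[\cdot]$ evaluates any polynomial of degree $\le 2(t+1)$ as a nonnegative combination of its values at a few real points, provided $2(t+1)\le n-1$, so this is automatic as long as $t=O(n)$. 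The binding constraints are the $g_R$-weighted ones; here multiplying $w^N_k$ by the linear factor coming from $g_R$ shifts the support point and one needs the value of $G_h$ at the shifted point to still be of the right sign. Quantifying exactly how far one can push $t$ before some $g_R$-constraint is violated is where the $\sqrt n$ threshold will emerge, via a second-moment / concentration estimate on $\|x\|_1$ under the (signed) ``measure'' $w^N$.

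For the \textbf{upper bound $\rho(K)\le n-\Omega(n^{1/3})$}, the task is the opposite: show that at level $t=n-\Theta(n^{1/3})$ \emph{no} symmetric feasible solution exists, i.e. the reduced inequalities~\eqref{eq:sym_psd_cond} are infeasible. The strategy is to produce an explicit ``dual'' certificate: a polynomial $G_h\in\mathcal S_t$ (for a well-chosen $h$, probably $h$ close to $n/2$ where the degree budget $2t$ is most generous) together with a choice of constraint index $R$, such that the requirement $\sum_k\binom nk g_R(x_k) w^N_k G_h(k)\ge0$ forces $\sum_k\binom nk w^N_k=0$, contradicting the normalization. Concretely, one wants $G_h$ vanishing on all of $\{0,\dots,n\}$ except possibly a tiny set, multiplied by a sign pattern that makes the combined weight $g_R(x_k)G_h(k)$ have a fixed sign on the support — then the constraint degenerates. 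The degree-$2t$ budget allows a polynomial with $n-2(n-t)=\Theta(n^{1/3})$-ish ``free'' roots to be placed, and the crux is a counting argument showing that $n-t$ being as small as $\Omega(n^{1/3})$ already leaves too little room: one cannot simultaneously keep $G_h$ nonnegative on the real interval $[h-1,n-h+1]$, zero on the prescribed integer set, and of degree $\le 2t$, while the $g_R$-weighting fights the sign at the one or two remaining integer points. I expect this counting/interpolation argument — balancing the number of forced roots of $G_h$ against its degree and the location of the interval where it must stay nonnegative — to be the main obstacle, and the exponent $1/3$ (rather than, say, $1/2$ or $\log n$) to come out of optimizing a trade-off between the width of the interval on which positivity is required and the number of integer zeros one can afford to impose.
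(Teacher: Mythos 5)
Your high-level framing is right -- symmetrize via convexity, reduce to one representative constraint ($R=N$), and apply Theorem~\ref{thm:PSD_as_polynomial} to turn feasibility into nonnegativity of sums $\sum_k\binom nk(k-\tfrac12)w^N_kG_h(k)$ -- and your upper-bound intuition (a degree-$t$ polynomial vanishing at all but $\Theta(n^{1/3})$ integer points, with the lone negative weight at $k=0$ winning the fight) is essentially the paper's certificate $P(k)=\prod_{i=1}^t(n-k-i+1)$. But there are two genuine gaps. First, the lower bound: you never commit to a candidate solution, and the family you guess (shifted $y^N[\alpha]$'s or conical combinations thereof) is the wrong one. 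Those solutions have alternating signs $(-1)^{n-|I|}$, which would make every $g_R$-weighted PSD condition delicate, and there is no reason they satisfy the constraints of $K$ at all. The paper simply takes the uniform solution $y^N_I=2^{-n}$; its nonnegativity makes the plain moment matrix PSD for free and leaves exactly one negative term (at $k=0$) in the $R=N$ constraint sum. The $\sqrt n$ threshold then comes not from a ``second-moment / concentration estimate'' but from a pigeonhole on root locations: any degree-$t$ polynomial with real roots in $[1,n]$ leaves some integer $u\approx n/(2(t+1))$ at distance $\ge n/(2(t+1))$ from every root, and the single term $\binom nu\bigl(\tfrac{n}{2(t+1)}\bigr)^{2t}/n^{2t}$ already exceeds $\tfrac12$ when $t\lesssim\sqrt n$. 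Without identifying the uniform candidate and this mechanism, the lower bound does not get off the ground.

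Second, the upper bound as you describe it is logically incomplete: showing that \emph{some} symmetric solution fails is not enough; you must show \emph{no} solution exists. The paper closes this by (i) the symmetrization lemma, which shows that if $\SoS_t(K)\neq\emptyset$ then the uniform solution is feasible, so it suffices to refute that one solution, and (ii) the fact that for this fully symmetric instance the reduction of Theorem~\ref{thm:PSD_as_polynomial} is an \emph{if and only if} (Theorem~7 of~\cite{KurpiszLM16}), so exhibiting one bad polynomial genuinely certifies that the moment matrix is not PSD. Your proposal invokes neither of these, and your proposed contradiction (``forces $\sum_k\binom nk w^N_k=0$'') is not the mechanism that actually works; what works is the quantitative estimate that $\sum_{k=1}^{n-t}\binom nk(k-\tfrac12)\bigl(\fallfac{(n-t)}{k}/\fallfac{n}{k}\bigr)^2<\tfrac12$ once $n-t=O(n^{1/3})$. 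So the exponents $1/2$ and $1/3$ in the statement are not derivable from the plan as written.
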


\begin{proof}\item
\paragraph*{The upper bound.} 

By symmetry, the solution $y_I^N = \frac{1}{2^n}$ for each $I \subseteq N$ is feasible to $\SoS_t(K)$ unless $\SoS_t(K) = \emptyset$. Let us assume that such a solution is feasible and consider the constraint of $K$ corresponding to $R = N$. Then, $g_R(x_I)$ is negative only when $I = \emptyset$.

To analyse the PSDness, we apply Theorem~\ref{thm:PSD_as_polynomial}. Notice that in this case we can assume that $P(k)$ is of the form $G_0(k)$, since if $h > 0$, the only negative term in the sum~\eqref{eq:sym_psd_cond} corresponding to $k=0$ is canceled due to $G_h(0)=0$, and the inequality holds trivially. Therefore, the PSDness condition~\eqref{eq:sos_constraints_ih} reduces to 
\begin{equation} \label{eq:Laurent_UB_PSD1}
\sum_{k=0}^n \binom{n}{k}\frac{1}{2^n}\left(k-\frac{1}{2}\right)P^2(k) \geq 0
\end{equation}
for every polynomial $P$ of degree $t$. Importantly, what is not mentioned in the statement of Theorem~\ref{thm:PSD_as_polynomial}, in this case the PSDness condition actually becomes an if and only if condition (see Theorem 7 in~\cite{KurpiszLM16}). Therefore, showing that~\eqref{eq:Laurent_UB_PSD1} is not satisfied implies that the PSDness condition~\eqref{eq:sos_constraints_ih} is not satisfied.

We now fix the polynomial as $P(k) = \prod_{i=1}^t(n-k-i+1)$, i.e., such that $P$ has the roots at $n, n-1, ..., n-t+1$, and argue that such a polynomial can never satisfy~\eqref{eq:Laurent_UB_PSD1} when $t$ is large. Indeed, rewriting the condition using this polynomial, removing the redundant factor $\frac{1}{2^n}$ and moving the negative term to the right hand side, we have the necessary requirement for the positive semidefiniteness that
$$
\sum_{k=1}^{n-t} \binom{n}{k}\left(k-\frac{1}{2}\right)\prod_{i=1}^t(n-k-i+1)^2 \geq \frac{1}{2}\prod_{i=1}^t(n-i+1)^2
$$
Notice that now the sum goes up to $n-t$ only, since all the terms $k > n-t$ are 0 by our choice of the polynomial. 
By dividing both sides by the positive term $\prod_{i=1}^t(n-i+1)^2$ and observing that $\frac{\prod_{i=1}^t(n-k-i+1)}{\prod_{i=1}^t(n-i+1)} = \frac{\fallfac{(n-t)}{k}}{\fallfac{n}{k}}$, the condition further simplifies to
\begin{equation} \label{eq:Laurent_UB_PSD2}
\sum_{k=1}^{n-t} \binom{n}{k}\left(k-\frac{1}{2}\right)\left(\frac{\fallfac{(n-t)}{k}}{\fallfac{n}{k}}\right)^2 \geq \frac{1}{2}
\end{equation}
Next we upper bound the sum on the left hand side of~\eqref{eq:Laurent_UB_PSD2} by considering a generic element for any $ 1 \leq k \leq n-t$. Any element can be bounded by
$$
\binom{n}{k}\left(k-\frac{1}{2}\right) \left( \frac{\fallfac{(n-t)}{k}}{\fallfac{n}{k}}\right)^2 \leq \frac{n^ke^k}{k^k}k\frac{(n-t)^{2k}}{(n-k)^{2k}} \leq  \frac{e^k}{k^{k-1}} \left(\frac{n(n-t)^2}{(n-k)^2}\right)^k
$$
Here we use $\frac{e^k}{k^{k-1}} < 3$ for any $k$ and $\frac{1}{n-k} \leq \frac{1}{t}$ for $k \leq n-t$. Then, for $t \geq n - o(\sqrt{n})$, it holds $\frac{n(n-t)^2}{t^2} < 1$ so we can approximate
$$
\frac{e^k}{k^{k-1}} \left(\frac{n(n-t)^2}{(n-k)^2}\right)^k \leq 3\frac{n(n-t)^2}{t^2}
$$

Now, the sum on the left hand side of~\eqref{eq:Laurent_UB_PSD2} is upper bounded by $3(n-t)\frac{n(n-t)^2}{t^2}$ and thus the solution is never feasible to $\SoS_t(K)$ if
$$
3\frac{n(n-t)^3}{t^2} < \frac{1}{2}
$$
Setting $t = n - C n^\frac{1}{3}$ satisfies the inequality asymptotically for an appropriate constant $C$.

\paragraph*{The lower bound.} We show that the symmetric solution $y^N_I = \frac{1}{2^n}$ is feasible for $\SoS_t(K)$ when $t$ is $\Omega(\sqrt{n})$. Again, by symmetry it is enough to show that the moment matrix of one constraint is PSD, and again we consider the constraint corresponding to $R = N$. Therefore, we need to show that~\eqref{eq:Laurent_UB_PSD1} is satisfied for any choice of the polynomial $P$ with degree less or equal to $t$. Writing the polynomial $P$ in root form with roots $r_i,~ i=1,...,t$, we get similarly as in~\eqref{eq:Laurent_UB_PSD2} the condition
\begin{equation}\label{eq:Laurent_LB_PSD1}
\sum_{k=1}^{n} \binom{n}{k}\left(k-\frac{1}{2}\right)\prod_{i=1}^t\left(\frac{k-r_i}{r_i}\right)^2 \geq \frac{1}{2}
\end{equation}
Now, we seek for a lower bound for the sum on the left hand side and find the condition on $t$ such that the lower bound still exceeds $\frac{1}{2}$. 

One can show (see~\cite{KurpiszLM16}) that the roots $r_i$ can be assumed to be real and to be located in the interval $[0,n]$. Furthermore, we can assume that the polynomial has degree of exactly $t$. Then, we look for the worst-case assignment for the roots. 

No matter how the roots are located, there exist at least one non-zero point $k \in N$ such that $|k-r_i| \geq \frac{n}{2(t+1)}$ for every root $r_i$ and $k \geq \lfloor \frac{n}{2(t+1)} \rfloor$. In the worst case the smallest of such points is $\lfloor\frac{n}{2(t+1)}\rfloor$. Let $u = \lfloor\frac{n}{2(t+1)}\rfloor$ be this point. Then, \eqref{eq:Laurent_LB_PSD1} is satisfied if we can show that
$$
\binom{n}{u}\left(u-\frac{1}{2}\right) \frac{\left(\frac{n}{2(t+1)}\right)^{2t}}{\prod_{i=1}^tr_i^2} \geq \frac{1}{2}
$$
Next, the worst case of the location for the roots in this formulation is $r_i = n$ for every $i=1,...,t$, since all the roots can be assumed to be less or equal to $n$. We can also get rid of the term $u-\frac{1}{2}$, since it is always greater than 1. We then obtain that~\eqref{eq:Laurent_LB_PSD1} holds if
$$
\binom{n}{u}\frac{\left(\frac{n}{2(t+1)}\right)^{2t}}{n^{2t}} \geq \frac{1}{2}
$$
Here we use the inequality $\binom{n}{u} > \frac{n^u}{u^u}$ and the fact that $\frac{1}{2(t+1)} \geq \frac{1}{4t}$ to get that if $\frac{n^u}{u^u}(4t)^{-2t} \geq \frac{1}{2}$
holds, then the solution is feasible for $\SoS_t(K)$. We have that $n \geq tu$, so the above is satisfied if
$$
\frac{(tu)^u}{u^u}(4t)^{-2t}  \geq \frac{1}{2} \Leftrightarrow t^u(4t)^{-2t} \geq \frac{1}{2}
$$
We have that $u \geq \frac{n}{4t}$, so it is enough to satisfy $t^{\frac{n}{4t}}(4t)^{-2t} \geq \frac{1}{2}$, which is equivalent to $4^{-2t}t^{\frac{n}{4t}-2t} \geq \frac{1}{2}$.
If here $t = \frac{\sqrt{n}}{4}$ we need to then satisfy $4^{-\sqrt{n}}\sqrt{n}^{\sqrt{n}/2} \geq \frac{1}{2}$, which holds asymptotically in $n$.
\end{proof}

\textbf{Open question.} We note that applying the theorem of~\cite{KurpiszLM16}, it is possible to perform numerical experiments to test the SoS rank of the polytope $K$ with large number of variables. For a fixed number of variables, the upper bound can be experimented by fixing the polynomial $P$ in~\eqref{eq:Laurent_UB_PSD1} in some systematic way and by finding the level $t$ (i.e., the number of roots) for which the expression is positive/negative for that polynomial. For the lower bound, the polynomial in~\eqref{eq:Laurent_UB_PSD1} can be expressed in the root form and the PSDness can be tested using a numerical solver to minimize the resulting expression, where the roots are the variables to be minimized.

Based on such experiments, we conjecture that the SoS rank of $K$ is ``close'' to $\frac{n}{2}$ and suggest that our bounds in Theorem~\ref{thm:rank_bounds_of_K} are far from being tight. Therefore we leave it as an open question to improve our bounds for the rank. 

\paragraph*{Acknowledgements.}
The authors would like to express their gratitude to Alessio Benavoli for helpful discussions.

{\small
\bibliographystyle{abbrv}
\bibliography{icalp2016}
}


\appendix
\section*{Appendix}

\section{Missing proofs}

\begin{lemma} \label{lemma:symmetric_solution_to_K}
For any $t \geq 1$, if $\SoS_t(K) \neq \emptyset$, then the solution $y_I^N = \frac{1}{2^n}$ for each $I \subseteq N$ is a feasible solution for $\SoS_t(K)$.
\end{lemma}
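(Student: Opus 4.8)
The plan is to exploit a symmetry of $K$ that is larger than the permutation symmetry behind Theorem~\ref{thm:PSD_as_polynomial}: the set $K$ in~\eqref{eq:Laurent_empty_set} is invariant under every coordinate complementation, and these act transitively on $\set{0,1}^n$, so the only solution invariant under all of them is the uniform one. Let $\phi_j$ denote the involution of $[0,1]^n$ that replaces the coordinate $x_j$ by $1-x_j$. The first step is the elementary algebraic observation that $\phi_j$ permutes the defining inequalities of $K$ among themselves, namely $g_R(\phi_j(x)) = g_{R\triangle\set{j}}(x)$ for every $R\subseteq N$ (if $j\in R$ the summand $x_j$ becomes $1-x_j$ and one lands on $g_{R\setminus\set{j}}$, and symmetrically if $j\notin R$). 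Since $x_{I\triangle\set{j}} = \phi_j(x_I)$, this also gives $g_R(x_{I\triangle\set{j}}) = g_{R\triangle\set{j}}(x_I)$.

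The second step is to show that the induced map $\Phi_j$ on solutions, $(\Phi_j y)_I := y_{I\triangle\set{j}}$, maps $\SoS_t(K)$ into itself. The normalization $\sum_I(\Phi_j y)_I = \sum_I y_I = 1$ is immediate. For the PSD constraints one reindexes $I\mapsto I\triangle\set{j}$ and uses the key fact that, on $\R^{\PS_q(N)}$ (for the relevant $q$), there is an invertible matrix $T_j$ with $Z_{I\triangle\set{j}} = T_j Z_I$ for all $I$: explicitly, $T_j$ acts as the identity on coordinates $J\not\ni j$ and by $(Z)_J\mapsto (Z)_{J\setminus\set{j}} - (Z)_J$ on coordinates $J\ni j$, so that, after listing the coordinates not containing $j$ first, $T_j$ is block triangular with $\det T_j = \pm1$. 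Consequently $\sum_I y_I Z_{I\triangle\set{j}}Z_{I\triangle\set{j}}^\top = T_j\paren{\sum_I y_I Z_I Z_I^\top}T_j^\top$, which is PSD exactly when the original moment matrix~\eqref{eq:sos_variables_ih} is; and for the slack matrices~\eqref{eq:sos_constraints_ih}, the same reindexing combined with $g_R(x_{I\triangle\set{j}}) = g_{R\triangle\set{j}}(x_I)$ turns the $R$-th slack matrix of $\Phi_j y$ into $T_j$ times the $(R\triangle\set{j})$-th slack matrix of $y$ times $T_j^\top$, which is PSD because $R\triangle\set{j}$ is again one of the constraint indices in~\eqref{eq:Laurent_empty_set}.

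Finally, $\SoS_t(K)$ is convex (the intersection of the hyperplane $\sum_I y_I = 1$ with preimages of PSD cones), so by the invariance just established, if $\SoS_t(K)\neq\emptyset$ and $y\in\SoS_t(K)$, then $\hat y := 2^{-n}\sum_{\eps\in\set{0,1}^n}\Phi_\eps(y)$ is still feasible, where $\Phi_\eps$ composes the flips selected by $\eps$, i.e.\ $(\Phi_\eps y)_I = y_{I\triangle\eps}$. But then $\hat y_I = 2^{-n}\sum_{\eps}y_{I\triangle\eps} = 2^{-n}\sum_{I'}y_{I'} = 2^{-n}$ for every $I$, so the uniform solution lies in $\SoS_t(K)$, proving the lemma. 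The only step carrying any real content is the second one — checking that the rank-one (zeta-vector) formulation of the hierarchy transforms correctly under bit flips. In the standard pseudo-expectation formulation this is just the well-known affine invariance of the SoS hierarchy; here the matrix $T_j$ is the explicit change of basis that realizes it, and verifying $Z_{I\triangle\set{j}} = T_j Z_I$ is the computation I would expect to be the main (though routine) obstacle.
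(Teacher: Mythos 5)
Your proof is correct and follows essentially the same route as the paper: both establish that each bit flip permutes the defining constraints of $K$ (via $g_{R \bigtriangleup S}(x_{I \bigtriangleup S}) = g_R(x_I)$) and maps feasible SoS solutions to feasible ones, then average over all $2^n$ flips to obtain the uniform solution. The only difference is cosmetic: where you exhibit the explicit congruence matrix $T_j$ with $Z_{I\bigtriangleup\set{j}} = T_j Z_I$, the paper verifies the same PSD-invariance by rewriting the quadratic form as $\sum_{I} z_I^N p^2(x_I) \geq 0$ over multilinear polynomials $p$ of bounded degree and noting that the substitution $x_i \mapsto 1-x_i$ preserves that degree, which is the same change of basis in different clothing.
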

\begin{proof}

Let $d$ be either $t$ or $t+1$. First we observe that there is a one-to-one correspondence between vectors $v \in \mathbb{R}^{\PS_d(N)}$ and multilinear polynomials $p(x)$ of degree $d$ over $\set{0,1}^n$. Using the basis where $v_I$ is the coefficient of the monomial $\prod_{i \in I}x_i$ in $p(x)$, it can be easily seen that $v^\top Z_I = p(x_I)$. 

Assume that one of the matrices associated with the SoS hierarchy is PSD (where $z_I^N$ is either $y_I^N$ or $g_\ell(x_I)y_I^N$), in other words 
\begin{equation} \label{eq:symmetry_proof_PSD}
v^\top\left(\sum_{I \subseteq N} z_I^N Z_IZ_I^\top\right)v= \sum_{I \subseteq N} z_I^N (Z_I^\top v)^2 = \sum_{I \subseteq N} z_I^N p^2(x_I) \geq 0
\end{equation}
for every vector $v \in \mathbb{R}^d$ and thus for every multilinear polynomial $p$ of degree $d$. Then, consider the polynomial $q_S(x)$ ``rotated'' by the set $S \subseteq N$ defined as $q_S(x_I) = p(x_{I \bigtriangleup S})$ for every point $x_I$. In words, the polynomial $q_S$ evaluates the polynomial $p$ such that it replaces $x_i$ by $1-x_i$ if $i \in S$. Thus, $q_S$ and $p$ have the same degree. Then, 
$$
\sum_{I \subseteq N} z_I^N q^2(x_I) = \sum_{I \subseteq N} z_I^N p^2(x_{I\bigtriangleup S}) = \sum_{I \subseteq N} z_I^N (Z_{I \bigtriangleup S}^\top v)^2
$$ 
showing that the matrix $\sum_{I \subseteq N} z_I^N Z_{I \bigtriangleup S}Z_{I \bigtriangleup S}^\top$ is PSD if the matrix in~\eqref{eq:symmetry_proof_PSD} is PSD. 

For the constraints of the polytope $K$ as in~\eqref{eq:Laurent_empty_set} we have have $g_R(x_I) = |N \setminus (R \bigtriangleup I)| - \frac{1}{2}$. By this observation for any $S \subseteq N$ it holds that that $g_{R \bigtriangleup S}(x_{I \bigtriangleup S}) = g_R(x_I)$.
 
Now, assume that there exists a solution $y^N$ to $\SoS_t(K)$. Then, for any $S \subseteq N$ also the solution $u^N$ such that $u^N_I = y^N_{I \bigtriangleup S}$ must be feasible. This is because for the solution $u^N$ we can write the PSDness condition of the constraint $R$ as
$$
\sum_{I \subseteq N} u_I^N g_{R}(x_I) Z_IZ_I^\top = \sum_{I \subseteq N} y_I^N g_{R \bigtriangleup S}(x_I)Z_{I\bigtriangleup S}Z_{I\bigtriangleup S}^\top
$$
which must be PSD by assumption and the above discussion. Averaging over all $S$ yields a symmetric solution to $\SoS_t(K)$.
\end{proof}

\begin{lemma} \label{lemma:three_facts}
For the polynomial in~\eqref{eq:Laurent_LB_PSD1} we have
\begin{enumerate}
 \item[(a)] all the roots $r_1,..., r_t$ are real numbers,
 \item[(b)] all the roots $r_1,..., r_t$ are in the range, $1 \leq r_j \leq n$ for all $j = 1,\ldots,t$,
 \item[(c)] the degree of the polynomial is exactly $t$.
\end{enumerate} 
\end{lemma}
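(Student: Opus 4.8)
The plan is to show that any polynomial $P$ of degree at most $t$ occurring on the left of~\eqref{eq:Laurent_LB_PSD1} can be modified, without increasing that quantity, into one that satisfies (a), (b) and (c); hence it suffices to establish the lower bound~\eqref{eq:Laurent_LB_PSD1} for such ``normalized'' polynomials. First I would fix the obvious normalization: \eqref{eq:Laurent_UB_PSD1} is homogeneous of degree $2$ in $P$ and is trivially satisfied when $P(0)=0$ (the unique negative term, $k=0$, then vanishes), so I may assume $P(0)=1$ and write $P(k)=\prod_i(1-k/r_i)$ with all $r_i\ne 0$; the generic summand of~\eqref{eq:Laurent_LB_PSD1} is then $\binom nk(k-\tfrac12)\prod_i(1-k/r_i)^2$.

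Part (c) is routine: if $\deg P=d<t$, replace $P$ by $P(x)\,(1-x/(n+1))^{\,t-d}$. This has degree exactly $t$, still takes the value $1$ at $x=0$, and for each $k\in\{1,\dots,n\}$ the new factor $(1-k/(n+1))^{2(t-d)}\in(0,1)$, so every summand of~\eqref{eq:Laurent_LB_PSD1} only decreases. For part (a), consider a complex conjugate pair of roots $\alpha,\bar\alpha$ and set $1/\alpha=u+iv$ with $v\ne 0$. The factor this pair contributes to $\prod_i(1-k/r_i)^2$ at an integer $k$ is $|1-k/\alpha|^2=(1-uk)^2+v^2k^2$, which is nonnegative and nondecreasing in $v^2$; letting $v\to 0$ thus does not increase any summand of~\eqref{eq:Laurent_LB_PSD1} and replaces the pair by a double real root at $1/u$. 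Performing this for one conjugate pair at a time keeps all the other contributed factors products of $|1-k/r_i|^2$'s, hence nonnegative, so the full left-hand side indeed only decreases; after finitely many such steps all roots are real. For part (b), with all roots real, isolate one root $r$ and write the left-hand side of~\eqref{eq:Laurent_LB_PSD1} as $h(r)=\sum_{k=1}^n c_k(1-k/r)^2$ where $c_k=\binom nk(k-\tfrac12)\prod_{i\ne j}(1-k/r_i)^2\ge 0$ for $k\ge 1$. Then $h(r)=A-2B/r+C/r^2$ with $A=\sum_k c_k$, $B=\sum_k c_k k$, $C=\sum_k c_k k^2$; as a function of $1/r$ this is an upward parabola whose minimum over $r\ne 0$ is attained at $r^\star=C/B$, and since $r^\star=\big(\sum_k c_k k^2\big)\big/\big(\sum_k c_k k\big)$ is a convex combination of those indices $k\in\{1,\dots,n\}$ with $c_k>0$, we get $r^\star\in[1,n]$. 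Replacing $r$ by $r^\star$ never increases $h$ and places this root in $[1,n]$; a single sweep over the $t$ roots therefore moves every root into $[1,n]$ without increasing~\eqref{eq:Laurent_LB_PSD1} (moving one root leaves the others untouched).

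Applying (c), then (a), then (b) turns an arbitrary admissible $P$ into one of degree exactly $t$ with all roots real and contained in $[1,n]$, and a valid lower bound~\eqref{eq:Laurent_LB_PSD1} for the latter implies one for $P$. The step that takes the real insight --- the ``main obstacle'' --- is (a): it is the reparametrization $\alpha\mapsto 1/\alpha=u+iv$ that exhibits the pair's squared contribution as $(1-uk)^2+v^2k^2$, manifestly monotone in $v^2$, so that sending the imaginary part to zero improves every summand at once; in the original variables the dependence of the summands on a complex root is sign-indefinite and no such monotonicity is visible. A small point to check along the way is that $B=\sum_{k\ge 1}c_k k>0$ in step (b), which holds because $\prod_{i\ne j}(1-k/r_i)^2$, having at most $t-1<n$ roots, cannot vanish for all $k=1,\dots,n$.
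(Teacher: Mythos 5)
Your proposal is correct and follows the same overall strategy as the paper's proof---normalize $P$ by a sequence of modifications, each of which can only decrease the left-hand side of~\eqref{eq:Laurent_LB_PSD1}---but the individual steps are genuinely different. For (a), the paper replaces a conjugate pair $a\pm bi$ by a double real root at the modulus $\sqrt{a^2+b^2}$ and checks the resulting inequality directly; you instead reparametrize via $1/\alpha=u+iv$, which makes the pair's contribution $\bigl((1-uk)^2+v^2k^2\bigr)^2$ visibly monotone in $v^2$ and yields a double root at $1/u$. Both work, but yours has a small defect: for a purely imaginary root ($u=0$) the limit root $1/u$ is undefined and the pair's factor degenerates to the constant $1$, dropping the degree by two; this is harmless but forces you either to handle that case separately or to reapply step (c) afterwards, so your chosen order (c)--(a)--(b) should really be (a)--(c)--(b) or be accompanied by that remark. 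For (b), the paper clamps roots case by case ($-a\mapsto a$, roots in $(0,1)$ to $1$, roots above $n$ to $n$); your argument is cleaner and more informative: exactly minimizing the quadratic in $1/r$ shows the optimal location of each root is the weighted mean $\sum_k c_kk^2/\sum_k c_kk$, which lies in $[1,n]$ automatically, and your verification that $\sum_k c_kk>0$ (because the remaining factor has at most $t-1<n$ roots) is precisely the point that needs checking. Step (c) matches the paper's up to using an auxiliary root at $n+1$ instead of $n$, which is immaterial since your step (b) subsequently moves it into $[1,n]$.
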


\begin{proof}
The proofs follow by inspecting~\eqref{eq:Laurent_LB_PSD1}.
\begin{enumerate}
 \item [(a)] Assume that some of the roots are complex and recall that complex roots of polynomials with real coefficients appear in conjugate pairs, i.e. $r_{2j-1} = a_j + b_ji$, $r_{2j} = a_j-b_ji$ for $j = 1,...,q$. Let $P'(k)$ be the polynomial with all real roots such that $r'_{2j-1} = r'_{2j} = \sqrt{a_{2j}^2 + b_{2j}^2}$ for $j=1,...,q$ and $r'_j = r_j, j>2q$.

For any $k\in N$ and $j\in[t]$, a simple calculation shows that
$$
\left(\frac{r_{2j-1}-k}{r_{2j-1}}\right)^2\left(\frac{r_{2j}-k}{r_{2j}}\right)^2 \geq \left(\frac{r'_{2j-1}-k}{r'_{2j-1}}\right)^2\left(\frac{r'_{2j}-k}{r'_{2j}}\right)^2
$$
Hence,
$$
\sum_{k = 1}^n  \binom{n}{k}\left(k-\frac{1}{2} \right)   \prod_{j=1}^t \left(\frac{r_j-k}{r_j}\right)^2 \geq
\sum_{k = 1}^n  \binom{n}{k} \left(k-\frac{1}{2} \right)  \prod_{j=1}^t \left(\frac{r'_j-k}{r'_j}\right)^2\\
$$

\item [(b)] Assume exactly one of the roots is negative, i.e., $r_1 = -a$, for $a>0$. Let $P'(k)$ be the univariate polynomial with all positive roots such that $r'_{1} = a$ and $r'_j = r_j, j>1$.
We have then for any $k\in N$ that
$$
 \left(\frac{-a-k}{-a}\right)^2\geq  \left(\frac{a-k}{a}\right)^2
$$
Similarly, let $P(k)$ be the univariate polynomial with $r_1 \in (0,1)$ and $r_j\geq 1$, for  $j>1$. Again, let $P'(k)$ be the univariate polynomial with $r_1=1$ and $r'_j = r_j, j>1$.
Again for any $k\in N$
$$
 \left(\frac{r_1-k}{r_1}\right)^2\geq  \left(\frac{1-k}{1}\right)^2
$$
Finally, let $P(k)$ be the univariate polynomial with $r_t =an$ for $a>1$ and $r_j \in [1,n]$, for  $j\neq t$. Let $P'(k)$ be the univariate polynomial with $r_t=n$ and $r'_j = r_j, j\neq t$.
As in the above cases, for any $k\in N$ we have
$$
 \left(\frac{an-k}{an}\right)^2\geq  \left(\frac{n-k}{n}\right)^2
$$

It follows that in each case it holds
$$
\sum_{k = 1}^n \binom{n}{k}\left(k-\frac{1}{2} \right)   \prod_{j=1}^t \left(\frac{r_j-k}{r_j}\right)^2 \geq
\sum_{k = 1}^n  \binom{n}{k} \left(k-\frac{1}{2} \right)  \prod_{j=1}^t \left(\frac{r'_j-k}{r'_j}\right)^2\\
$$

\item[(c)]  Let $P(k)$ be the univariate polynomial with degree $s<t$ with all real roots $r_j$. Let $P'(k)$ be the polynomial of degree $t$ with all real roots such that $r'_j = r_j, j\leq s$ and $r'_j=n$ for $s<j\leq t$. For any $k\in N$, we have

$$
1 \geq \left(\frac{n-k}{n}\right)^2
$$
Hence,
$$
\left(\frac{r_1-k}{r_1}\right)^2\cdots \left(\frac{r_{s}-k}{r_{s}}\right)^2 \geq \left(\frac{r_1-k}{r_1}\right)^2\cdots \left(\frac{r_{s}-k}{r_{s}}\right)^2 \left(\frac{n-k}{n}\right)^{2(t-s)}
$$
and finally
$$
\sum_{k = 1}^n \binom{n}{k}\left(k-\frac{1}{2} \right)  \prod_{j=1}^s \left(\frac{r_j-k}{r_j}\right)^2 \geq
\sum_{k = 1}^n   \binom{n}{k}\left(k-\frac{1}{2} \right)  \prod_{j=1}^t \left(\frac{r'_j-k}{r'_j}\right)^2\\
$$

\end{enumerate}
\end{proof}

\end{document}